\newtheorem{Theorem}{Theorem}[section]
\newtheorem{Lemma}[Theorem]{Lemma}
\def\be{\begin{eqnarray}}
\def\ee{\end{eqnarray}}
\newcommand{\cc}{\mathcal C}
\newcommand{\ch}{\mathcal H}
\newcommand{\cl}{\mathcal L}
\newcommand{\calp}{\mathcal P}
\newcommand{\cs}{\mathcal S}
\newcommand{\cv}{\mathcal V}
\newcommand{\fa}{\mathfrak{a}}  
\newcommand{\fb}{\mathfrak{b}}  
\newcommand{\fc}{\mathfrak{c}}  
\newcommand{\fd}{\mathfrak{d}}
\newcommand{\fh}{\mathfrak{h}}
\newcommand{\fv}{\mathfrak{v}}
\newcommand{\R}{\mathbb{R}}
\renewcommand{\a}{\mathfrak{a}}
\renewcommand{\b}{\mathfrak{b}}
\newcommand{\sig}{\sigma}
\newcommand{\cyl}{\mathbf{Cyl}}
\newcommand{\sgn}{\mathrm{sgn}}
\newcommand{\Pl}{\ell_p}
\newcommand{\clb}{{\underline{b}}}
\newcommand{\clc}{{\underline{c}}}
\newcommand{\clv}{{\underline{\fv}}}
\newcommand{\clz}{{\underline{\zeta}}}
\newcommand{\efh}{{H_{\rm eff}}}
\newcommand{\conb}{{\underline{\tilde{b}}}}
\newcommand{\conc}{{\underline{\tilde{c}}}}
\newcommand{\conv}{{\underline{\tilde{\fv}}}}
\newcommand{\conz}{{\underline{\tilde{\zeta}}}}
\begin{document}
\title{Reduced phase space quantization of black holes: Path integrals and effective dynamics}
\author{Cong Zhang\footnote{czhang(AT)fuw.edu.pl}}
\affiliation{Faculty of Physics, University of Warsaw, Pasteura 5, 02-093 Warsaw, Poland}
\begin{abstract}
We consider the loop quantum theory of the spherically symmetric model of gravity coupled to Gaussian dust fields, where the Gaussian dust fields provide a material reference frame of the space and time to deparameterize gravity. This theory, used to study the quantum features of the spherically symmetric black hole, is constructed based on a 1-dimensional lattice $\gamma\subset\mathbb R$. Taking advantage of the path integral formulation, we investigate the quantum dynamics and obtain an effective action. With this action, we get an effective continuous description of this quantum lattice system which is not the same as the one described by the effective Hamiltonian used in \cite{Han:2020uhb}, i.e. the classical Hamiltonian with the holonomy correction.  It turns out that the Hamiltonian derived in this paper returns that used in \cite{Han:2020uhb} only for macro black holes, since the lattice $\gamma$ is required to be sufficiently fine. Indeed, it is necessary to propose this fine-grained lattice structure in order to well describe the underlying lattice theory by the continuous description.
\end{abstract}
\maketitle

%

\section{Introduction}
Loop quantum gravity (LQG), as one of the most promising candidate for the theory of quantum gravity
\cite{ashtekar2004background,rovelli2004quantum,han2007fundamental,thiemann2008modern}, has led to several significant achievements on the quantum features of spacetime \cite{Rovelli:1993bm,rovelli1988knot,rovelli1995discreteness,ashtekar1997quantum,ashtekar1997quantumII}.
However, there are still many unsettled  issues, particularly those on the dynamics of LQG. In spite of these efforts towards the full LQG dynamics \cite{Thiemann2006,Thiemann2006a,EPRL,Domagala:2010bm,Husain:2011tk,Yang:2015zda,Alesci:2015wla,Dapor:2017rwv,Lang:2017beo,Zhang:2018wbc,Zhang:2019dgi}, one applied the loop quantization procedures to symmetry-reduced models of gravity. The resulting models (see e.g. \cite{ashtekar2003mathematical,Bojowald:2006da,Ashtekar:2006wn,Ashtekar:2008zu,Yang:2009fp,Ashtekar:2009vc,Agullo:2016tjh} for cosmological case, and see e.g.  \cite{bojowald2004spherically,Ashtekar:2005qt,Modesto:2005zm,Bohmer:2007wi,chiou2008phenomenological,Chiou:2012pg,Gambini:2013hna,corichi2016loop,Olmedo:2017lvt,Ashtekar:2018cay,Ashtekar:2018lag,BenAchour:2018khr,Bodendorfer:2019cyv,zhang2020loop,gan2020properties,rastgoo2021deformed,Gambini:2020nsf,Han:2020uhb} for black hole case), regarded as some type of symmetry-reduced sectors of LQG, yield valuable insight into the full LQG theory, regardless of the debates on embedding them into the full LQG \cite{Engle:2007zz,Bodendorfer:2015hwl,Bodendorfer:2016tky}. 

There are two approaches to study the quantum nature of the spherically symmetric black hole (BH) with loop quantum method. An approach is to loop quantize the Schwarzschild interior. 
By virtue of the homogeneity of the Schwarzschild interior, this approach leads to models of finite degrees of freedom \cite{Ashtekar:2005qt,Modesto:2005zm,Bohmer:2007wi,chiou2008phenomenological,corichi2016loop,Olmedo:2017lvt,Ashtekar:2018cay,Ashtekar:2018lag,BenAchour:2018khr,Bodendorfer:2019cyv,zhang2020loop,gan2020properties,rastgoo2021deformed}. In this frame work, the effective dynamics was investigated in details based on various quantization strategies (see e.g. \cite{Bohmer:2007wi,chiou2008phenomenological} for $\mu_0$-scheme and $\bar\mu$-scheme, and see \cite{corichi2016loop,Ashtekar:2018cay} for the modified scheme balancing the $\mu_0$-scheme and $\bar\mu$-scheme). According to these works, the BH singularity is resolved. Particularly, the singularity is replaced by quantum bounce, which connects the BH with a white whole in the framework of \cite{Bohmer:2007wi} for $\mu_0$-scheme and the framework of \cite{Ashtekar:2018cay} for a modified scheme. However, for the $\bar\mu$-scheme, according to the analysis in \cite{Bohmer:2007wi,chiou2008phenomenological}, jumping over the quantum bounce, the classical BH gives birth to a baby BH which brings forth its own baby BH. This scenario continuous, giving the extended spacetime fractal structure, until reaching eventually a Nariai type spacetime. Moreover, the quantum dynamics, in this framework, was also investigated by \cite{zhang2020loop}, which supports the existence of BH remnant by considering a general assumption on the quantization strategies. The other approach to investigate the quantum nature of BH is to consider the BH interior and exterior as a whole (see e.g. \cite{Chiou:2012pg,Gambini:2013hna,Gambini:2020nsf,Han:2020uhb}). These works consider the spherical symmetry to reduce gravity, which leads to models of (1+1)-dimension field theory on $\mathbb{R}$. An advantage of these models is that they treat both the black hole interior and exterior in a unified manner.

The current work adapts the latter approach to the spherically symmetric model of gravity coupled to Gaussian dust fields as in \cite{Han:2020uhb}. In this model, gravity is deparameterized by the Gaussian dust fields which provide a material reference frame for both time and space. Thus its dynamics is governed by a physical Hamiltonian $\mathbf{H}$, which is identical to the Hamiltonian constraint of pure gravity with the lapse function $N=1$. Classically, this dynamics gives the Lema\^itre-Tolman-Bondi spacetimes. By applying the loop quantization procedure to this model, we construct the quantum theory of this model. As the usual loop quantum theory, the states of this quantum model are constructed based on some graph $\gamma$. In this paper, $\gamma$ is chosen to be a 1-dimensional lattice, i.e. $\gamma\subset\mathbb R$, whose vertices are denoted by $v\in V(\gamma)$. Each vertex carries some quantum numbers denoted by $\zeta(v)$ and $\fv(v)$, where $\fv(v)$ is interpreted as the volume of some region containing the vertex $v$. 
To promote the physical Hamiltonian to an operator, we use the $\bar\mu$-scheme regularization strategy to get a regularized physical Hamiltonian $\mathbf{H}_\Delta$. The Hamiltonian operator $\widehat{\mathbf{H}_\Delta}$ is thus derived by quantizing $\mathbf{H}_\Delta$. To study the quantum dynamics, the path integral formulation is borrowed to calculated the transition amplitude $A(\vec\zeta_i,\vec\fv_i,\vec\zeta_f,\vec\fv_f,T)$ from the initial state $|\vec\zeta_i,\vec\fv_i\rangle$ at time $T_i=0$ to the final state $|\vec\zeta_f,\vec\fv_f\rangle$ at time $T_f=T$ \cite{EPRL,Ashtekar:2009dn,Ashtekar:2010ve,Han:2019vpw}. It turns out that this transition amplitude can be simplified in the standard form, 
\begin{equation}\label{eq:firstEq}
\begin{aligned}
A(\vec\zeta_i,\vec\fv_i,\vec\zeta_f,\vec\fv_f,T)=\int\prod_{v'\in V(\gamma)}\mathcal D[b(v')]\mathcal D[ c(v')] \mathcal D[\ln(|\zeta(v')|)]\mathcal D[\fv(v')]e^{i\frac{1}{\hbar}S(\vec \zeta,\vec \fv,\vec b,\vec c)}
\end{aligned}
\end{equation}
where $\vec{b}$ and $\vec{c}$ are variables introduced during the derivation. According to \eqref{eq:firstEq}, the classical path can be obtained by the method of stationary phase approximation, i.e. the equation $\delta S=0$. $S(\vec \zeta,\vec \fv,\vec b,\vec c)$ is thus referred to as the effective action, from which the effective Hamiltonian $H(\vec\zeta,\vec\fv,\vec c,\vec b)$ is deduced. 

We are now in a position to reconstruct an effective classical theory from the path integral formulation. This effective theory is a 1-dimensional lattice field theory whose dynamics is govern by $H(\vec\zeta,\vec\fv,\vec c,\vec b)$. An important result of our work is that $H(\vec\zeta,\vec\fv,\vec c,\vec b)$ is not the same $\mathbf{H}_\Delta$, while the former can return to the latter if the volume of each vertex $v$, i.e. $\fv(v)$, is much larger than the Planck volume. Here, it should be recalled that $\mathbf{H}_\Delta$ is the regularized Hamiltonian and its quantization gives the Hamiltonian operator $\widehat{\mathbf{H}_\Delta}$. In the work like \cite{Han:2020uhb}, one usually uses the continuous limit of $\mathbf{H}_\Delta$, denoted by $\tilde H_{\rm eff}$, to study the effective dynamics with the holonomy corrections. However, according to our analysis, this treatment is valid only
if the following inequality holds,
\begin{equation}\label{eq:inequivality0}
\frac{2GM}{\Pl}\gg \frac{4\pi}{\alpha_0}\frac{\Pl}{\delta x}
\end{equation}
where $M$ is the ADM mass of the Schwarzschild BH (the dynamics given by $\tilde H_{\rm eff}$ can recover the Schwarzschild solution in Lema\^itre coordinate according to \cite{Han:2020uhb}), $\delta x$, taking small values, is some specific coordinate length of the edges in $\gamma$ and $\alpha_0$ is some constant. Since $\delta x$ takes small values, this inequality tells us that
the effective description by $\tilde H_{\rm eff}$ is valid for macro BHs but not for small BH. In other words, the macro and small BHs behave differently once the loop quantum effects are considered. 

For a macro BH, its dynamical behavior is described by $\tilde H_{\rm eff}$ which has been studied in \cite{Han:2020uhb}. According to the results therein, the BH singularity is resolved by quantum bounce. Far away from quantum bounce, there are two asymptotic regimes sandwiching the quantum-bounce region. In one regime, the solution is semiclassical and reduces to the Schwarzschild spacetime in Lema\^itre coordinates. In the other asymptotic regime, the effective spacetime looks like the Nariai geometry which takes the metric of $\mathrm{dS}_2\times \mathbb S^2$. Moreover, in the Nariai regime, both the area of $\mathbb S^2$ and the $\mathrm{dS}$-radius of the Nariai geometry are of quantum size. Thus, our model predicts a quantum final fate of Schwarzschild BH. However, it should be emphasized again that this is credible for the case where \eqref{eq:inequivality0} hold. 

It is worth noting that the current work is only concerned about the vacuum solution which can recover the Schwarzschild BH, even though the Gaussian dusts are considered for deparameterization. For these solutions, the physical Hamiltonian, identical to the energy of the dusts field, vanishes. Thus they do not possess any evolution with respect to the dust fields. Thinking of the object evolving here as a moving particle, we can analogize these vacuum solutions to a particle at rest, regardless of $T=0$ which means that the ``time'' itself also disappears in our situation. Moreover, let us return to the original constraint theory where the deparameterization has not done. 
Then the solutions to the constraint equations are identical to the dynamical solutions derived from the deparameterized model where we regard the dynamical solutions as functions of both gravitaty and dust fields. This identity tells us that, for the cases with non-vanishing dusts, one can pick the dust fields as time to endow the physical states with a relational-evolution picture, while for vanishing-dust case, one should choose other fields to do this. From these perspectives, it makes sense to consider the vacuum solution even in the deparameterization model.

This paper is organized as follows. In Sec. \ref{sec:two}, we review briefly the classical deparameterized model of spherically symmetric gravity coupled to Gaussian dust.  Then the quantum kinematical theory of this model is introduced in Sec. \ref{sec:three}. Adapting the $\bar\mu$-scheme, Sec. \ref{sec:four} 
presents the the physical Hamiltonian operator and its the action. With these results, the quantum dynamics is studied by using the path integral formulation in Sec. \ref{sec:six}, where the transition amplitude is rewritten in the standard path integral form. Finally, the effective dynamics is studied in Sec. \ref{sec:seven}. Sec. \ref{sec:eight} summarizes  the present work and gives some outlook on it.

\section{Classical theory}\label{sec:two}

\subsection{Deparametrized gravity with Gaussian dust} \label{Deparametrized gravity with Gaussian dust}
The current work considers the model of gravity coupled to the Gaussian dusts \cite{Han:2020uhb,Kuchar:1990vy,Giesel:2012rb}. The classical phase space of this model contains the Ashtekar-Barbero canonical conjugate pairs $(A^i_\a(y),E^\a_i(y))$ for gravity and, $(T,P)$ and $(S^a,P_a)$ with $(a=1,2,3)$ for the Gaussian dusts, where $T,S^{a=1,2,3}$ will be the clock fields and define the time and space coordinates in the dust reference frame.

The dynamics of this model is encoded in the Gaussian constraint, the diffeomorphism  constraint and the Hamiltonian constraint \cite{Kuchar:1990vy,Giesel:2012rb}, where the diffeomorphism and Hamiltonian constraints read 
\be
\cc^{\mathrm{tot}}&=&P+ h,\quad h= \cc \sqrt{1+q^{\a \b} T_{, \a} T_{, \b}}-q^{\a \b} T_{, \a} \cc_{\b},\label{cc111}\\
\cc_{\a}^{\mathrm{tot}}&=&\cc_{\a}+P T_{, \a}+P_{j} S_{, \a}^{j}
\ee
with $\cc,\cc_\a$ being the purely gravitationalHamiltonian and diffeomorphism constraints. Let $(t,\sigma)$ denote the physical time and space coordinates respectively, where their  values $t(y)$ and $\sigma(y)$ at a given spacetime point 
$y$ is given by $T(y)= t$ and $S^a(y)=\sigma^a$ respectively. Then one can parameterize $(A,E)$ with respect to the dust field $T,\ S^a$ to construct the relational observables $A_a^i(t,\sigma)$ and $E_j^a(t,\sigma)$ as 
$$
\begin{aligned}
A_a^i(t,\sig)=&\left.A_\a^i(y)\frac{\partial y^\a}{\partial\sigma^a}\right|_{T(y)\equiv t,\,S^a(y)\equiv\sig^a},\\
E_j^a(t,\sig)=&\left.E_j^\a(y)\frac{\partial\sigma^a}{\partial y^\a}\right|_{T(y)\equiv t,\,S^a(y)\equiv\sig^a}.
\end{aligned}
$$
Here one should distinguish the usual coordinate index $\a=1,2,3$ and the dust coordinate index $a=1,2,3$. It turns out that these relational observables are commutes with the diffeomorphism and the Hamiltonian constraints \cite{Dittrich:2004cb,Thiemann:2004wk,Giesel:2007wn}. Thus they are Dirac observables. Moreover,  $A^j_a(t,\sig)$ and $E_j^a(t,\sig)$ satisfy the standard Poisson bracket in the dust frame: 
\be
\{A^j_b(t,\sig'),E_i^a(t,\sig)\}=\kappa \beta\delta_{i}^j\delta^a_b\delta^{3}(\sig',\sig)
\ee 
where $\beta$ is the Barbero-Immirzi parameter and $\kappa=8\pi G$.

Taking advantage of the deparametrization procedure with respect to the dust fields, we now have a natural picture of evolution $t\mapsto (A_a^i(t,\sig),E_j^a(t,\sig))$. This evolution is generated by the physical Hamiltonian ${\bf H}_0$ which is the integration of $h$ on the slice $\mathcal S$ given by $T(y)=t_0$ for some $t_0$. $\mathcal S$ is referred to as the dust space where a natural coordinate is $\mathcal S\ni s\mapsto \sigma(s)\in\mathbb R^3$ \cite{Giesel:2007wn,Giesel:2012rb}. Since $T_{,\a}=0$ on $\mathcal S$, one has
\be\label{eq:h0}
{\bf H}_0=\frac{2\pi}{\kappa}\int_\cs\dd^3\sig\, \cc(\sig),
\ee 
where $\cc(\sig)$ is expressed in terms of the Dirac observables $A^j_a(\sig):=A^j_a(t_0,\sigma),E_j^a(\sigma)=E_j^a(t_0,\sig)$. We now have a usual Hamiltonian system which is constructed by choose a spatial slice $T(y)=t$ for some $t$. However, 
By this expression, ${\bf H}_0$ is the smeared gravity Hamiltonian $\cc$ with the lapse function $N=1$.

\subsection{Spherical symmetry reduction}

We assume the dust space $\cs\simeq \R\times S^2$ and define the spherical coordinate ${\sig}=(x,\theta,\phi)$. We further reduce the reduced phase space to the phase space $\Gamma_{\rm red}$ of spherical symmetric field configurations. The spherically symmetric conjugate variables $(A^j_a,E^a_j)$ take the form \cite{Gambini:2013hna,Chiou:2012pg,Ashtekar:2005qt}
\begin{equation}\label{eq:AEexpress}
\begin{aligned}
A^j_a\tau_i\dd \sigma^a=&A_1(x)\tau_1\dd x+\frac{1}{\sqrt{2}}( A_2(x)\tau_2+ A_3(x)\tau_3)\dd\theta+\frac{1}{\sqrt{2}}( A_2(x)\tau_3-A_3(x)\tau_2)\sin(\theta)\dd\varphi+\cos(\theta)\tau_1\dd\varphi\\
E_j^a\tau^j\frac{\partial}{\partial\sigma^a}=&E^1(x)\sin(\theta)\tau_1\partial_x+\frac{1}{\sqrt{2}}( E^2(x)\tau_2+ E^3(x)\tau_3)\sin(\theta)\partial_\theta+\frac{1}{\sqrt{2}}( E^2\tau_3-E^3\tau_2)\partial_\varphi
\end{aligned}
\end{equation}
where $\tau_j=-i/2(\text{Pauli matrix})_j$.
Let $\Gamma_{\rm red}$ denote the reduced phase space of the spherically symmetric Ashtekar variables. The symplectic form $\Omega$ on $\Gamma_{\rm red}$ reads
\begin{equation}\label{eq:symplecticform}
\begin{aligned}
\Omega(\delta_1,\delta_2)=&-\frac{1}{\kappa\beta}\int\dd^3\sigma \delta_1 A_j^a(\sigma)\wedge\delta_2 E^j_a(\sigma)\\
=&-\frac{1}{2G\beta}\int\left(\delta_1A_1(x)\wedge \delta_2 E^1(x)+\delta_1 A_2(x)\wedge\delta_2 E^2(x)+\delta_1A_3(x)\wedge\delta_2 E^3(x) \right)\dd x
\end{aligned}
\end{equation}
where $\delta_1$ and $\delta_2$ are two tangent vectors in the tangent space $T_{(A,E)}\Gamma_{\rm red}$ and $\delta_1 F\wedge\delta_2 G\equiv \delta_1 F\delta_2 G-\delta_2 F\delta_1 G$. According to $\Omega$, one has the Poisson bracket 
\begin{equation}
\{A_I(x),E^J(y)\}=2G\beta\delta(x,y)\delta_I^J,\quad I,J=1,2,3.
\end{equation}
The symmetry-reduced theory is an (1+1)-dimensional field theory.

The procedure of the deparametrization in Section \ref{Deparametrized gravity with Gaussian dust} does not solve the Gauss constraint. We still need to impose Gauss constraint to $\Gamma_{\rm red}$. Substituting \eqref{eq:AEexpress} into the general expression of the Gaussian constraint, one has the expression of $G[\lambda]$, associated to a $\mathfrak{su}(2)$-valued smeared function $\lambda(x)=\lambda^i(x)\tau_i$ \cite{bojowald2004spherically,Chiou:2012pg},
\begin{equation}\label{eq:gaussian}
G[\lambda]=4\pi\int \lambda^1(x)\left[A_2(x)E^3(x)-A_3(x)E^2(x)+\partial_xE^1(x)\right]
\end{equation}
Then the gauge transformation $\cl_G$ generated by the  Gaussian constraint, i.e.,
\begin{equation}
\cl_G(F):=e^{\{F,G[\lambda]\}}=1+\sum_{n=1}^\infty\frac{1}{n!} \underbrace{\{\cdots \{F,G[\lambda]\},G[\lambda]\}\cdots\},G[\lambda]\}}_{n \text{ folds }},
\end{equation}
gives
\be
\cl_G(A_1(x))&=&A_1(x)-\kappa\beta\partial_x\lambda^1(x)\\
\cl_G(E^1(x))&=&E^1(x)\\
\cl_G(A_2(x)+iA_3(x))&=&e^{i\kappa\beta\lambda^1(x)}(A_2+i A_3)\\
\cl_G(E^2(x)+iE^3(x))&=&e^{i\kappa\beta\lambda^1(x)}(E^2+i E^3).
\ee
That is, $A_1$ transforms as a U(1) gauge field and $A_2+i A_3$ transforms as a U(1) Higgs field.  

Given a state $(A_I,E^I)\in\Gamma_{\rm red}$, one can always gauge transform it with some $\lambda$ to vanish  $E^3$. Thus we introduce a gauge fixing condition \begin{equation}
E^3(x)=0.
\end{equation}
The gauge fixed phase space then is coordinatized by $(A_I(x),E^I(x))$ with $I=1,2$ and, the Gaussian constraint \eqref{eq:gaussian} solves $A_3(x)$, leading to
\begin{equation}
A_3(x)=\frac{\partial_xE^1(x)}{E^2(x)}.
\end{equation}

We introduce the following variables\footnote{$K_x(x),E^x(x),K_\varphi(x),E^\varphi(x)$ are the same as in e.g.  \cite{Han:2020uhb,Gambini:2013hna},}
\begin{equation}
\begin{aligned}
K_x(x):=\frac{1}{2\beta}A_1(x),\ &E^x(x)=E^1(x),\\
K_\varphi(x):=\frac{1}{\sqrt{2}\beta}A_2(x),\ &E^\varphi(x)=\frac{1}{\sqrt{2}}E^2(x).
\end{aligned}
\end{equation}
The gauge-fixed reduced phase space $\calp\subset\Gamma_{\rm red}$ consist of canonical pairs $(K_x(x),E^x(x))$ and $(K_\varphi(x),E^\varphi(x))$. Restricting the symplectic form on $\calp$, we get the non-vanishing Poisson brackets
\begin{equation}
\{K_I(y),E^I(z)\}=G\delta(y-z),\ I=x,\varphi.
\end{equation}

The classical physical Hamiltonian on $\calp$ is obtained from ${\bf H}_0$ in \eqref{eq:h0} by implementing the spherical symmetry, and adding a boundary term ${\bf H}_{\rm bdy}$ 
\begin{equation}
\mathbf{H}=\frac{2\pi}{\kappa}\int \dd x\, \cc(x)+{\bf H}_{\rm bdy}
\end{equation}
where $\cc(x)$ is expressed as 
\begin{equation}\label{eq:cx}
\begin{aligned}
\cc(x)&=\frac{\sgn(E^{\varphi}{})}{\sqrt{\left|E^{x}{}\right|}}\Bigg(-\frac{E^x{}^\prime{}^2}{2E^\varphi}-8E^{x}K_x K_\varphi-2E^{\varphi}K_\varphi^2-2E^{\varphi}\Bigg)+\left(\frac{2E^{x}{}E^{x}{}^{\prime}}{\sqrt{\left|E^{x}{}\right|}|E^{\varphi}{}|}\right)^\prime
\end{aligned}
\end{equation}
The equations of motion from ${\bf H}$ is formally the same as the pure gravity dynamics with unit lapse, although the foliation is provided by the Gaussian dust.

The space of $x$ is non-compact so boundary condition is necessary. We follow \cite{Han:2020uhb} to discuss the boundary conditions and the corresponding boundary terms ${\bf H}_{\rm bdy}$. Firstly, since we are going to discuss the quantum effective modification of the Schwarzschild black hole. It is useful to introduce the asymptotic boundary condition of the Schwarzschild at $x\to\infty$. A foliation of the Schwarzschild spacetime with unit lapse is the Lema\^itre coordinate. The asymptotic boundary condition in the Lema\^itre coordinate reads
\be
E^x\sim \left(\frac32\sqrt{R_s}x\right)^{4/3},\quad&&K_x\sim \frac{R_s}{3\times 12^{1/3} \left(\sqrt{R_s}x\right)^{4/3}},\\
E^\varphi\sim \sqrt{R_2}\left(\frac{3}{2}\sqrt{R_s}x\right)^{1/3},\quad&& K_\varphi\sim-\frac{(2/3)^{1/3}\sqrt{R_s}}{\left(\sqrt{R_s}x\right)^{1/3}}.
\ee
The following boundary Hamiltonian at $x\to\infty$ turns out to be suitable for this boundary condition for canceling boundary terms of $\delta\mathbf{H}$.  
\begin{equation}
\mathbf{H}_{\rm bdy}=-\frac{4\pi}{\kappa}\left.\left(\frac{2E^{x}{}E^{x}{}^{\prime}}{\sqrt{\left|E^{x}{}\right|}|E^{\varphi}{}|}-2\sqrt{|E^x|} \right)\right|_\infty.
\end{equation}
Alternatively, instead of the above Schwarzschild asymptotic boundary condition, we may make an infrared cut-off at $x=L\gg1$ and impose the Dirichlet boundary condition 
\be\label{eq:condition1}
\delta E^x|_{x=L}=0
\ee
In this case, ${\bf H}_{\rm bdy}$ is the same except for locating at the finite boundary $x=L$:
\be
\mathbf{H}_{\rm bdy}=-\frac{4\pi}{\kappa}\left.\left(\frac{2E^{x}{}E^{x}{}^{\prime}}{\sqrt{\left|E^{x}{}\right|}|E^{\varphi}{}|}-2\sqrt{|E^x|} \right)\right|_{x=L}.
\ee
In addition, on the other side $x\to -\infty$ or $x=-L$, we impose the Neumann boundary condition 
\be\label{eq:condition2}
E^x{}'=0.
\ee
This boundary condition will be manifest in the effective dynamics of the black hole. We do not need any boundary Hamiltonian for the Neumann boundary condition. 
By using ${\bf H}_{\rm bdy}$, the Hamiltonian $\mathbf{H}$ can be simplified to the following form
\begin{equation}\label{eq:clH}
\begin{aligned}
\mathbf{H}=-\frac{2\pi}{\kappa}\int\dd x\, \fh(x)+\frac{1}{G}\left.\sqrt{|E^x|}\right|_{\rm bdy}
\end{aligned}
\end{equation}
with
\begin{equation}\label{eq:hx}
\fh(x)=\frac{\sgn(E^{\varphi}{})}{\sqrt{\left|E^{x}{}\right|}}\Bigg(\frac{E^x{}^\prime{}^2}{2E^\varphi}+8E^{x}K_x K_\varphi+2E^{\varphi}K_\varphi^2+2E^{\varphi}\Bigg).
\end{equation}

Let $\cv(x)$ denote the diffeomorphism constraint of pure gravity. Then $\cv(x)$ in the reduced model reads
\begin{equation}\label{eq:vx}
\cv(x)=K_x(x) E^x{}'(x) -E^\varphi(x) K_\varphi'(x).
\end{equation}
It can be verified that $\{\mathbf{H},\cv(x)\}=0$. Thus the time evolution by $\mathbf{H}$ has infinitely many conserved charges $\cv(x)$. It turns out that the Schwartzchild solution in Lema\^itre coordinates is obtained with setting the initial condition $\cv(x)=0$. Then $\{\mathbf{H},\cc(x)\}\propto \cv(x)$ also vanishes, which gives another conversed charges $\cc(x)$.  

\section{Quantum kinematics}\label{sec:three}

Discretize $x$-space to 1-dimensional equidistant lattice $\gamma$ whose vertices are denoted by $v$. $e(v)$ denotes the edge starting at $v$ and oriented toward the positive direction. The sets of edges and vertices in $\gamma$ are denoted by $E(\gamma)$ and $V(\gamma)$ respectively. It is worth noting that, to make $\gamma$ finite, we restrict ourselves in the interval $[-L,L]$ with $L\gg 1$ and consider the fields satisfying the boundary conditions aforementioned.

To do quantization, the fields $K_I$ and $E^I$ ($I=x,\ \varphi$) are first smeared by using $\gamma$ as follows
\begin{eqnarray}
\theta(v)=\int_{e(v)}\dd x\ K_x,\quad p(v)=E^x({\rm mid}_{e(v)}),\label{eq:xvariable}\\
\Phi(v)= K_\varphi({\rm mid}_{e(v)}),\quad \Pi(v)=\int_{e(v)}\dd x\ E^\varphi,\label{eq:phivariable}
\end{eqnarray}
where ${\rm mid}_{e(v)}$ is the middle point within $e(v)$. Here, $K_x(x)$ is gauge fixed from the $U(1)$ gauge field and, thus, $e^{i\theta(v)}$ is its holonomy along $e(v)$. However, 
$K_\varphi$ is gauge fixed from the $U(1)$ Higgs field, and, thus, $e^{i\Phi(v)}$ is its point holonomy at $v$. 

With the variables $\theta(v)$ and $\Phi(v)$, the Hilbert space, constructed by the polymer quantization procedure, reads
\begin{equation}
\ch=\overline{\Big(L^2(\mathbb R_b,\dd\mu_h)\otimes L^2(\mathbb R_b,\dd\mu_h) \Big)^{|V(\gamma)|}}. 
\end{equation} 
where $\mathbb R_b$ is the Bohr compactification of the real line $\mathbb R$ and $\dd\mu_h$ is its Haar measure. To represent the elements in $\ch$ explicitly, let us define the space $\cyl$ of almost periodic functions of $\vec\theta=\{\theta(v)\}_{v\in V(\gamma)}$ and $\vec\Phi=\{\Phi(v)\}_{v\in V(\gamma)}$. In other words, elements in $\cyl$ are finite linear combination of the functions $T_{\vec\mu,\vec\lambda}$
\begin{equation}
T_{\vec\mu,\vec\lambda}(\vec\theta,\vec\Phi)=\prod_{v\in V(\gamma)}e^{i\lambda(v)\theta(v)+i\mu(v)\Phi(v)}. 
\end{equation}
Equipping $\cyl$ with the inner product 
\begin{equation}
\begin{aligned}
\langle\Psi_1|\Psi_2\rangle=\lim_{T\to\infty}\left(\frac{1}{2T}\right)^{2|V(\gamma)|}\int_{-T}^T\cdots\int_{-T}^T\prod_{v\in V(\gamma)}\dd\theta(v)\dd\Phi(v)\overline{\Psi_1(\vec\theta,\vec\Phi)}\Psi_2(\vec\theta,\vec\Phi).
\end{aligned}
\end{equation}
we have
\begin{equation}\label{eq:innerTmunu}
\langle T_{\{\vec\lambda,\vec\mu\}}|T_{\{\vec\lambda',\vec\mu'\}}\rangle=\delta_{\vec\lambda,\vec\lambda'}\delta_{\vec\mu,\vec\mu'}.
\end{equation}
where the right hand side is the Kronecker delta.
The Hilbert space $\ch$ is identical to the Cauchy completion of the space $\cyl$ with respect this inner product.
By using the Dirac bra-ket notions, we will denote
\begin{equation}
T_{\{\vec\lambda,\vec\mu\}}\equiv |\vec\lambda,\vec\mu\rangle.
\end{equation}
\eqref{eq:innerTmunu} means that $|\vec\lambda,\vec\nu\rangle$ forms a family of orthonormal basis of $\ch$.

In $\ch$, there are the basic operators $\widehat{e^{i\lambda_o(v)\theta(v)}}$, $\widehat{e^{i\mu_o(v)\Phi(v)}}$, $\hat p(v)$ and $\hat \Pi(v)$ with constants $\lambda_o(v)$ and $\mu_o(v)$:
\be\label{eq:basicoperators}
\widehat{e^{i\lambda_o(v)\theta(v)}}|\vec\lambda,\vec\mu\rangle&=&|\vec\lambda+\lambda_o(v)\vec\delta_v,\vec\mu\rangle\\
\widehat{e^{i\lambda_o(v)\Phi(v)}}|\vec\lambda,\vec\mu\rangle&=&|\vec\lambda,\vec\mu+\mu_o(v)\vec\delta_v\rangle\\
\hat p(v)|\vec\lambda,\vec\mu\rangle&=&\Pl^2\lambda(v)|\vec\lambda,\vec\mu\rangle,\\
\hat \Pi(v)|\vec\lambda,\vec\mu\rangle&=&\Pl^2\mu(v)|\vec\lambda,\vec\mu\rangle.
\ee
where $\Pl^2=\kappa \hbar$ and $\vec\delta_v$ is defined by
\begin{equation}
\vec\delta_v(v')=\left\{
\begin{aligned}
0,&\quad v'\neq v,\\
1,&\quad v'=v.
\end{aligned}
\right.
\end{equation}
It is remarkable that that operators $\widehat{e^{i\lambda_o(v)\theta(v)}}$ and $\widehat{e^{i\mu_o(v)\Phi(v)}}$ are not strongly continuous. Therefore, there is no operators $\hat \theta(v)$ and $\hat \Phi(v)$ such that $\widehat{e^{i\lambda_o(v)\theta(v)}}=\exp(i\lambda(v)\hat\theta(v))$ and $\widehat{e^{i\lambda_o(v)\Phi(v)}}=\exp(i\lambda(v)\hat\Phi(v))$ in this model.

Let $V_v$ denoted the volume of the region $e(v)\times\mathbb S^2$. By definition, $V_v$ is
\begin{equation}
V_v= \int_{e(v)}\dd x\int_0^\pi\dd\theta\int_0^{2\pi}\dd\varphi \left|E^{\varphi}\right|\sqrt{\left|E^{x}\right|}\sin(\theta)=4\pi|\Pi(v)|\sqrt{|p(v)|},
\end{equation}
where we used that $\sgn(E^\varphi)$ is a constant. Thus its corresponding operator $\hat{V}_v=4\pi|\hat{\Pi}(v)|\sqrt{|\hat{p}(v)|}$ takes the basis $|\vec{\lambda},\vec{\mu}\rangle$ as its eigenstate, i.e.
\begin{eqnarray}
\hat{V}_v|\vec{\lambda},\vec{\mu}\rangle=4\pi \Pl^3|\mu(v)|\sqrt{|\lambda(v)|}\,|\vec{\lambda},\vec{\mu}\rangle.
\end{eqnarray}
For convenience $\hat{V}_v$ as well as $V_v$ will be referred to as the volume (operator) at $v$.

\section{The physical Hamiltonian operator}\label{sec:four}
\subsection{$\bar\mu$-scheme regularization and quantization}
Due to the absence of the operators $\hat \theta(v)$ and $\hat\Phi(v)$, the classical expression \eqref{eq:hx} of the physical Hamiltonian needs to be regularized in terms of $e^{i\lambda(v)\theta(v)}$ and $e^{i\mu(v)\theta(v)}$. One can refer to \cite{Chiou:2012pg,Han:2020uhb} and references therein for the detailed derivations on the regularization. Generally speaking, the regularization procedure results in a family of regularized expressions $\mathbf H_{\vec\lambda,\vec\mu}$, depending on parameters $\vec\lambda$ and $\vec\mu$.  Classically, it has 
$$\lim_{\vec\lambda,\ \vec\mu\to 0}\mathbf H_{\vec\lambda,\vec\mu}=\mathbf{H}.$$
However, in the quantum theory the limit $\lim\limits_{\vec\lambda,\ \vec\mu\to 0}\hat{\mathbf{H}}_{\vec\lambda,\vec\mu}$ fails to exist, where $\hat{\mathbf{H}}_{\vec\lambda,\vec\mu}$ denotes the quantization operator of $\mathbf{H}_{\vec\lambda,\vec\mu}$.
To overcome the failure of the limit to exist, the idea proposed in loop quantum cosmology model \cite{ashtekar2003mathematical,Ashtekar:2006wn} is borrowed, which lead to the operator $\hat{\mathbf{H}}$, i.e. the quantization operator of $\mathbf{H}$, defined as  
$$\hat{\mathbf{H}}=\lim_{\vec\lambda\to\vec{\underline\lambda},\vec\mu\to\vec{\underline\mu}}\hat{\mathbf{H}}_{\vec\lambda,\vec\mu}$$ 
with some non-vanishing quantities $\vec{\underline\lambda}=\{\underline\lambda(v)\}$ and $\vec{\underline\mu}=\{\underline\mu(v)\}$. There are several strategies to  choose $\vec{\underline\lambda}$ and $\vec{\underline\mu}$. The current work focus only on the $\bar\mu$-scheme as in \cite{Chiou:2012pg,Han:2020uhb} where $\vec{\underline\lambda}$ and $\vec{\underline\mu}$ are
\begin{equation}
\underline\lambda(v)=\bar\lambda(v):=\frac{2\beta \sqrt{\Delta}\Pl\sqrt{\left|p(v)\right|}}{\Pi(v)},\ \underline\mu(v)=\bar\mu(v):=\frac{\beta\sqrt{\Delta}\Pl}{\sgn(p(v))\sqrt{\left|p(v)\right|}}.
\end{equation}
This strategy leads to the regularized physical Hamiltonian $\mathbf{H}_{\Delta}$, referred as the improved Hamiltonian. {Indeed, the improved Hamiltonian is equal to the effective Hamiltonian used in \cite{Han:2020uhb}, but replacing the integral over $x$ and, the variables $K_I$ as well as $E^I$ ($I=x,\varphi$), respectively, by discrete sum over $v\in V(\gamma)$ and, the corresponding smeared variables.
}
Explicitly, $\mathbf{H}_\Delta$ reads
\begin{equation}
{\mathbf H}_\Delta=-\frac{1}{G}\sum_v\fh_{\Delta,v}+\frac{1}{G}\sqrt{|p(v_b)|}
\end{equation}
with
\begin{equation}\label{eq:discreteh}
\begin{aligned}
\fh_{\Delta,v}
=&\frac{V_v \sin \left(\frac{\beta  \sqrt{\Delta } \Pl}{\sgn(p(v))\sqrt{\left| p(v)\right| }}\Phi(v)\right) \sin \left(\frac{2 \beta  \sqrt{\Delta } \Pl\sqrt{\left| p(v)\right| }}{\Pi(v)}\theta(v) \right)}{4\pi\beta ^2 \Delta \Pl^2}\\
&+\ \frac{V_v  \sin ^2\left(\frac{\beta  \sqrt{\Delta }\Pl }{\sgn(p(v))\sqrt{\left| p(v)\right| }}\Phi(v)\right)}{8\pi \beta ^2 \Delta\Pl^2 }+2\pi\frac{\Pi(v)^2}{V_v}+\frac{\pi}{2}\frac{1}{V_v} \left[p(v+1)-p(v)\right]^2
\end{aligned}
\end{equation}
where $v_b$ denotes the source of the edge $e_b$ connecting to the boundary and $v+1$ denotes the target of $e(v)$ and $v_b$ denote the boundary vertices. It is noted that
$\sum_v\fh_{\Delta,v}$ reproduce $\int\dd x\, \fh(x)$ in the limits of low curvature and lattice refinement ($e_v$ shrinks to zero size while vertices become dense).

The expression of $\fh_{\Delta,v}$ contains the inverse volume $1/V_v$ whose naive quantization $1/\hat{V}_v$ usually cause divergence at the zero eigenvalue of $\hat{V}_v$. 
Due to this difficulty, we need the trick 
\begin{equation}
\begin{aligned}
&V(\mathcal R)^{-1}=\frac{-4\times 27}{6}\left(\frac{2}{\kappa\beta}\right)^3\int_{\mathcal R} \left|\epsilon^{abc}\tr\left(\{A_a(x),V(\mathcal R)^{1/3}\}\{A_b(x),V(\mathcal R)^{1/3}\}\{A_c(x),V(\mathcal R)^{1/3}\}\right)\right|.
\end{aligned}
\end{equation}
which can be verified by applying the Thiemann's trick \cite{thiemann2008modern}. Applying this formula to our current model, one finally get, by a straightforward calculation, that 
\begin{eqnarray}\label{trick2}
\frac{1}{V_{v}}=\lim_{\lambda,\mu\to 0}\left(\frac{27}{8\pi^{2}iG^{3}\mu^{2}\lambda}\right)\mathrm{sgn}(p(v))\mathcal{Q}_{\theta,\lambda}^{(1/3)}(v) \mathcal{Q}_{\Phi,\mu}^{(1/3)}(v) \mathcal{Q}_{\Phi,\mu}^{(1/3)}(v),
\end{eqnarray}
where
\begin{eqnarray}
\mathcal{Q}_{\theta,\lambda}^{(r)}(v)&=&e^{i\lambda\theta(v)/2}\left\{ e^{-i\lambda\theta(v)/2},V_{v}^{r}\right\} -e^{-i\lambda\theta(v)/2}\left\{ e^{i\lambda\theta(v)/2},V_{v}^{r}\right\} \\
\mathcal{Q}_{\Phi,\mu}^{(r)}(v)&=&e^{i\mu\Phi(v)/2}\left\{ e^{-i\mu\Phi(v)/2},V_{v}^{r}\right\}-e^{-i\mu\Phi(v)/2}\left\{ e^{i\mu\Phi(v)/2},V_{v}^{r}\right\}.
\end{eqnarray}
Then the inverse volume operator can be quantized through replacing the Poisson brackets by the commutators. 

Because of the expression \eqref{eq:discreteh} of $\fh_{\Delta,v}$, we consider the new holonomy operators 
\begin{equation}\label{eq:newholo}
\begin{aligned}
\hat{h}_{\theta,\Delta}(v)&:=\widehat{\exp\left[i\bar\lambda(v)\theta(v)\right]}\\
\hat{h}_{\Phi,\Delta}(v)&:=\widehat{\exp\left[i\bar\mu(v)\Phi(v)\right]}.
\end{aligned}
\end{equation}
Because of the dependence of $\bar\lambda(v)$ and $\bar\mu(v)$ on $p(v)$ and $\Pi(v)$, to define these two operators, we use the idea proposed by \cite{Ashtekar:2006wn} and set 
\begin{equation}
\Phi(v)=i\Pl^2\frac{\dd}{\dd \Pi(v)},\ \theta(v)=i\Pl^2\frac{\dd}{\dd p(v)}.
\end{equation}
Then the RHSs of \eqref{eq:newholo} are written as
\begin{equation}\label{eq:hthetap}
\begin{aligned}
\hat{h}_{\theta,\Delta}(v)\psi(\vec\lambda,\vec\mu)=&\exp\left[-\frac{2\beta \sqrt{\Delta}\sqrt{\left|\lambda(v)\right|}}{\mu(v)} \frac{\dd}{\dd \lambda(v)}\right]\psi(\vec\lambda,\vec\mu)\\
\hat{h}_{\Phi,\Delta}(v)\psi(\vec\lambda,\vec\mu)=&\exp\left[-\frac{\beta\sqrt{\Delta}\Pl}{\sgn(p(v))\sqrt{|p(v)|}} \frac{\dd}{\dd \mu(v)}\right]\psi(\vec\lambda,\vec\mu)
\end{aligned}
\end{equation}
Denoting
$$s(x)=\sgn(x)\sqrt{|x|},$$
one can calculate the RHSs of \eqref{eq:hthetap} to get
\begin{equation}\label{eq:h12onpsi}
\begin{aligned}
\left(\hat{h}_{\theta,\Delta}(v)\psi\right)(\vec\lambda,\vec\mu)&=\psi(s^{-1}[s(\vec\lambda)-\beta\sqrt{\Delta}\,\mu(v)^{-1}\vec\delta_v],\vec\mu)\\
\left(\hat{h}_{\Phi,\Delta}(v)\psi\right)(\vec\lambda,\vec\mu)&=\psi(\vec \lambda,\vec\mu-\beta\sqrt{\Delta}s(\lambda(v))^{-1}\vec\delta_v)
\end{aligned}
\end{equation}
where $s(\vec\lambda):=\{s(\lambda(v))\}_{v\in V(\gamma)}$ and $s^{-1}$ denotes its inverse function\footnote{Even though the wave function $\psi(\vec\lambda,\vec\mu)$ in \eqref{eq:hthetap} is not differentiable, the resulting operators $\hat{h}_{\theta,\Delta}(v)$ and $\hat{h}_{\Phi,\Delta}(v)$ given by \eqref{eq:h12onpsi} are well defined for those non-differentiable functions. Moreover, this definition of $\hat{h}_{\theta,\Delta}(v)$ and $\hat{h}_{\Phi,\Delta}(v)$ keeps the classical algebra structure between variables. }. 

Here it should be noted that the derivation from \eqref{eq:hthetap} to \eqref{eq:h12onpsi} is quite formal, since the wave function $\psi(\vec\lambda,\vec\mu)$ in \eqref{eq:hthetap} is not differentiable. 

For convenience, we use $\vec\zeta=\frac{s(\vec\lambda)}{\beta\sqrt{\Delta}}$ instead of $\vec\lambda$ to re-label the state $|\vec\lambda,\vec\mu\rangle$. Then we have
\begin{equation}
\sgn(\hat p(v))\sqrt{|\hat p(v)|}\,|\vec\zeta,\vec\mu\rangle=\beta\sqrt{\Delta}\Pl\zeta(v)|\vec\zeta,\vec\mu\rangle,\ \ \hat \Pi(v)|\vec\zeta,\vec\mu\rangle=\Pl^2\mu(v)|\vec\zeta,\vec\mu\rangle.
\end{equation}
With the new conventions,  \eqref{eq:h12onpsi} becomes
\begin{equation}\label{eq:hsnewbasis}
\begin{aligned}
\hat{h}_{\theta,\Delta}(v)|\vec\zeta,\vec\mu\rangle&=|\vec\zeta+\mu(v)^{-1}\vec\delta_v,\vec\mu\rangle,\\
\hat{h}_{\Phi,\Delta}(v)|\vec\zeta,\vec\mu\rangle&=|\vec\zeta,\vec\mu+\zeta(v)^{-1}\vec\delta_v\rangle.
\end{aligned}
\end{equation}
Here, the derivation uses $(\hat O\psi)(\vec\zeta,\vec\mu)=\langle\vec\zeta,\vec\mu|\hat O|\psi\rangle$ for operator $\hat{O}$. Moreover, the action of the volume operator at $v$ with the new convention is
\begin{equation}
\hat V_v|\vec\zeta,\vec\mu\rangle=4\pi\Pl^3\beta\sqrt{\Delta}|\mu(v)\zeta(v)|\,|\vec\zeta,\vec\mu\rangle. 
\end{equation} 

In the framework of the $\bar\mu$-scheme, the inverse volume in \eqref{trick2} is quantized as 
\begin{equation}
\begin{aligned}
\widehat{1/V_v}&=\lim_{\substack{\lambda\to\bar\lambda(v)\\ \mu\to\bar\mu(v)} } \widehat{\left(1/V_v\right)_{\vec\lambda,\vec\mu}},\\
\left(1/V_v\right)_{\vec\lambda,\vec\mu}&=\left(\frac{27}{8\pi^{2}iG^{3}\mu^{2}\lambda}\right)\mathrm{sgn}(p(v))\mathcal{Q}_{\theta,\lambda}^{(1/3)}(v) \mathcal{Q}_{\Phi,\mu}^{(1/3)}(v) \mathcal{Q}_{\Phi,\mu}^{(1/3)}(v).
\end{aligned}
\end{equation} 
This equation leads to the operators $\hat{\mathcal Q}_{\theta,\bar\lambda}^{(r)}$ and $\hat{\mathcal Q}_{\Phi,\bar\mu}^{(r)}$ defined as
\begin{equation}
\begin{aligned}
\hat{\mathcal Q}_{\theta,\bar\lambda}^{(r)}(v) & :=  \lim_{\lambda\to\bar\lambda}\hat{\mathcal Q}_{\theta,\lambda}^{(r)}  =  \frac{1}{i\hbar}e^{i\bar\lambda\theta(v)/2}\left[ e^{-i\bar\lambda\theta(v)/2},V_{v}^{r}\right]- \frac{1}{i\hbar}e^{-i\bar\lambda\theta(v)/2}\left[ e^{i\bar\lambda\theta(v)/2},V_{v}^{r}\right],\nonumber\\
\hat{\mathcal Q}_{\Phi,\bar\mu}^{(r)}(v)&:=\lim_{\mu\to\bar\mu}\hat{\mathcal Q}_{\Phi,\mu}^{(r)}=\frac{1}{i\hbar}e^{i\bar\mu\Phi(v)/2}\left[ e^{-i\bar\mu\Phi(v)/2},V_{v}^{r}\right]-\frac{1}{i\hbar}e^{-i\bar\mu\Phi(v)/2}\left[ e^{i\bar\mu\Phi(v)/2},V_{v}^{r}\right]
\end{aligned}
\end{equation}
where the second qualities used $[\bar\lambda,V_v]=0=[\bar\mu,V_v]$. A straightforward calculation gives that
\begin{equation}
\hat{\mathcal Q}_{\theta,\bar\lambda}^{(r)}(v)=\hat{\mathcal Q}_{\Phi,\bar\mu}^{(r)}(v)\equiv \hat{\mathcal Q}^{(r)}(v)
\end{equation}
with $$\hat{\mathcal Q}^{(r)}(v)|\vec\zeta,\vec\mu\rangle=\frac{(4\pi\Pl^3\beta\sqrt{\Delta})^r}{i\hbar}\mathcal{B}(\zeta(v),\mu(v),r)|\vec\zeta,\vec\mu\rangle$$ and
\begin{equation}
\mathcal{B}(\zeta(v),\mu(v),r)=\Big|\mu(v)\zeta(v)+\frac12\Big|^r-\Big|\mu(v)\zeta(v)-\frac12\Big|^r
\end{equation}
Therefore, the inverse volume operator reads 
\begin{equation}
\begin{aligned}
&\widehat{V_v^{-1}}|\zeta(v),\mu(v)\rangle=\left(\frac{27}{4\pi\Pl^3\sqrt{\Delta}\beta}\right)\, B(\zeta(v)\mu(v))|\zeta(v),\mu(v)\rangle
\end{aligned}
\end{equation}
with
\begin{equation}
B(x)=|x|\,\Big||x+1/2|^{1/3}-|x-1/2|^{1/3}\Big|^3.
\end{equation}

Taking advantage of the blocks defined in the last section, we can now quantize the expression \eqref{eq:discreteh} to get the operator $\hat{\fh}_{\Delta,v}$
\begin{equation}\label{eq:operatorh}
\begin{aligned}
\hat{\fh}_{\Delta,v}
=&\sqrt{\hat V_v}\frac{\widehat{ \sin \left(\bar\mu(v)\Phi(v)\right)}\widehat{ \sin \left(\bar\lambda(v)\theta(v) \right)}}{4\pi\beta ^2 \Delta\Pl^2 }\sqrt{\hat V_v}\\
&+\sqrt{\hat V_v}\frac{  \widehat{\sin ^2\left(\bar\mu(v)\Phi(v)\right)}}{8\pi \beta ^2 \Delta\Pl^2 }\sqrt{\hat V_v}+2\pi\widehat{V_v^{-1}}\hat\Pi(v)^2+\frac{\pi}{2}\widehat{V_v^{-1}}\left[\hat p(v+1)-\hat p(v)\right]^2.
\end{aligned}
\end{equation}
With this expression, the physical Hamiltonian operator  $\widehat{\mathbf{H}_\Delta}$ reads
\begin{equation}
\widehat{\mathbf{H}_\Delta}=-\frac{1}{2G}\sum_v (\hat{\fh}_{\Delta,v}+\hat{\fh}_{\Delta,v}^\dagger)+\frac{1}{G}\sqrt{|\hat p(v_b)|}.
\end{equation}
Since $\hat{V}_v$ annihilates $|\vec{\lambda},\vec{\mu}\rangle$ when $\lambda(v)=0$ or $\mu(v)=0$, the operator orderings in the first and second terms of $\hat{\fh}_{\Delta,v}$ ensure that $\hat{\fh}_{\Delta,v}$ is densely defined on the entire $\ch$.

\subsection{The action of the physical Hamiltonian operator}

It is sufficient to present the action of $\hat\fh_{\Delta,v}$ instead of $\widehat{\mathbf{H}_\Delta}$ itself.
To calculate the action, we convert the trigonometric functions in \eqref{eq:operatorh} to exponentials and define 
\begin{equation}\label{eq:handh1234}
\hat\fh_{\Delta,v}=\hat{\fh}_1(v)+\hat{\fh}_2(v)+\hat{\fh}_3(v)+\hat{\fh}_4(v),
\end{equation}
with
\begin{equation}
\begin{aligned}
\hat{\fh}_1(v)&=-\sqrt{\hat V_v}\frac{\left(\hat h_{\Phi,\Delta}(v)-\hat h_{\Phi,\Delta}(v)^{-1} \right)\left(\hat h_{\theta,\Delta}(v)-\hat h_{\theta,\Delta}(v)^{-1}\right)}{16\pi\beta ^2 \Delta \Pl^2}\sqrt{\hat V_v}\\
\hat{\fh}_2(v)&=-\sqrt{\hat V_v}\frac{\left(\hat h_{\Phi,\Delta}(v)-\hat h_{\Phi,\Delta}(v)^{-1} \right)^2}{32\pi \beta ^2 \Delta \Pl^2}\sqrt{\hat V_v}\\
\hat{\fh}_3(v)&=2\pi\widehat{V_v^{-1}}\hat\Pi(v)^2\\
\hat{\fh}_4(v)&=\frac{\pi}{2}\widehat{V_v^{-1}}\left[\hat p(v+1)-\hat p(v)\right]^2.
\end{aligned}
\end{equation}

Consider the action of operators $\hat h_{\Phi,\Delta}(v)^m\hat h_{\theta,\Delta}(v)^n$ and $\hat h_{\theta,\Delta}(v)^n\hat h_{\Phi,\Delta}(v)^m$ at first. We have
\begin{equation}
\begin{aligned}
\hat h_{\Phi,\Delta}(v)^m\hat h_{\theta,\Delta}(v)^n|\zeta(v),\mu(v)\rangle=&\left|\zeta(v)+\frac{n}{\mu(v)},\frac{\mu(v)\zeta(v)+n+m}{\zeta(v)+n/\mu(v)}\right\rangle\\
\hat h_{\theta,\Delta}(v)^n\hat h_{\Phi,\Delta}(v)^m|\zeta(v),\mu(v)\rangle=&\left|\frac{\mu(v)\zeta(v)+m+n}{\mu(v)+m/\zeta(v)},\mu(v)+\frac{m}{\zeta(v)}\right\rangle.
\end{aligned}
\end{equation}
Because $\hat h_{\Phi,\Delta}(v)$ and $\hat h_{\theta,\Delta}(v)$ shift only the components of $\vec\zeta$ and $\vec\mu$ at $v$, i.e. $\zeta(v)$ and $\mu(v)$, $\hat{\fh}_1(v)$ and $\hat{\fh}_2(v)$ preserve the values of $\zeta(v')$ and $\mu(v')$ with $v'\neq v$. Thus we are only concerned on the actions of $\hat{\fh}_1(v)$ and $\hat{\fh}_2(v)$ on $|\zeta(v),\mu(v)\rangle$.  A straightforward calculation gives that
\begin{equation}\label{eq:actionh1}
\begin{aligned}
&\hat\fh_1(v)|\zeta(v),\mu(v)\rangle\\
=&-\frac{\Pl}{4\beta\sqrt{\Delta}}\Bigg(\sqrt{|\mu(v)\zeta(v)||\mu(v)\zeta(v)+2|}\left|\zeta(v)+\frac{1}{\mu(v)},\frac{\mu(v)\zeta(v)+2}{\zeta(v)+1/\mu(v)}\right\rangle\\
&-|\zeta(v)\mu(v)|\left|\zeta(v)-\frac{1}{\mu(v)},\frac{\mu(v)\zeta(v)}{\zeta(v)-1/\mu(v)}\right\rangle-|\zeta(v)\mu(v)|\left|\zeta(v)+\frac{1}{\mu(v)},\frac{\mu(v)\zeta(v)}{\zeta(v)+1/\mu(v)}\right\rangle\\
&+\sqrt{|\mu(v)\zeta(v)||\mu(v)\zeta(v)-2|}\left|\zeta(v)-\frac{1}{\mu(v)},\frac{\mu(v)\zeta(v)-2}{\zeta(v)-1/\mu(v)}\right\rangle\Bigg).
\end{aligned}
\end{equation}
Since $\hat\fh_1(v)$ is not symmetric, we also need to consider its adjoint $\hat\fh_1(v)^\dagger$, whose action reads
\begin{equation}\label{eq:actionh1d}
\begin{aligned}
&\hat\fh_1(v)^\dagger|\zeta(v),\mu(v)\rangle\\
=&-\frac{\Pl}{4\beta\sqrt{\Delta}}\Bigg(\sqrt{|\mu(v)\zeta(v)||\mu(v)\zeta(v)+2|}\left|\frac{\mu(v)\zeta(v)+2}{\mu(v)+1/\zeta(v)},\mu(v)+\frac{1}{\zeta(v)}\right\rangle\\
&-|\zeta(v)\mu(v)|\left|\frac{\mu(v)\zeta(v)}{\mu(v)+1/\zeta(v)},\mu(v)+\frac{1}{\zeta(v)}\right\rangle-|\zeta(v)\mu(v)|\left|\frac{\mu(v)\zeta(v)}{\mu(v)-1/\zeta(v)},\mu(v)-\frac{1}{\zeta(v)}\right\rangle\\
&+\sqrt{|\mu(v)\zeta(v)||\mu(v)\zeta(v)-2|}\left|\frac{\mu(v)\zeta(v)-2}{\mu(v)-1/\zeta(v)},\mu(v)-\frac{1}{\zeta(v)}\right\rangle.
\end{aligned}
\end{equation}
For the operator $\fh_2(v)$, we have
\begin{equation}\label{eq:actionh2}
\begin{aligned}
&\hat\fh_2(v)|\zeta(v),\mu(v)\rangle\\
=&-\frac{\Pl}{8\beta\sqrt{\Delta}}\Bigg(\sqrt{|\mu(v)\zeta(v)||\mu(v)\zeta(v)+2|}\left|\zeta(v),\mu(v)+\frac{2}{\zeta(v)}\right\rangle-2|\mu(v)\zeta(v)|\Big|\zeta(v),\mu(v)\Big\rangle+\\
&\sqrt{|\mu(v)\zeta(v)||\mu(v)\zeta(v)-2|}\left|\zeta(v),\mu(v)-\frac{2}{\zeta(v)}\right\rangle\Bigg)
\end{aligned}
\end{equation}
For the operators $\fh_3(v)$ and $\fh_4(v)$, the basis $|\vec\mu,\zeta\rangle$ are their eigenvectors. Their actions is
\begin{equation}
\begin{aligned}\label{eq:actionh3}
&\hat{\fh}_3(v)|\zeta(v),\mu(v)\rangle=27\Pl \frac{\mu(v)^2}{2\beta\sqrt{\Delta}} B(\zeta(v)\mu(v)) |\zeta(v),\mu(v)\rangle,
\end{aligned}
\end{equation}
and
\begin{equation}
\begin{aligned}\label{eq:actionh4}
\hat\fh_4(v)|\vec\zeta,\vec\mu\rangle=\frac{27\Pl\beta^3\sqrt{\Delta}^3}{8}B(\zeta(v)\mu(v))\left[\sgn(\zeta((v+1))\zeta(v+1)^2-\sgn(\zeta(v))\zeta(v)^2\right]^2|\vec\zeta,\vec\mu\rangle.
\end{aligned}
\end{equation}
Substituting the expression \eqref{eq:actionh1}, \eqref{eq:actionh2}, \eqref{eq:actionh3} and \eqref{eq:actionh4} into \eqref{eq:handh1234}, one finally can get the action of $\hat{\fh}_{\Delta,v}$, which will not be written again. 

It is worth noting that the quantity 
$$\fv(v):=\zeta(v)\mu(v)$$
often appears in the  RHSs of \eqref{eq:actionh1}--\eqref{eq:actionh4} as a whole. Then, these equations can be simplified in more compact forms. To do this, we will use $\vec\fv=\{\fv(v)\}_{v\in V(\gamma)}$ instead of $\vec\mu$ to re-label the basis $|\vec\zeta,\vec\mu\rangle$. However, this re-labeling can only be defined for those $|\vec\zeta,\vec\mu\rangle$ satisfying  $\zeta(v)\neq 0$ for all $v$, because this condition guarantees the solvability of  $\mu(v)$ through $\fv(v)=\zeta(v)\mu(v)$. 
Given a state $|\vec\zeta,\vec\fv\rangle$ with $\zeta(v)\neq 0,\ \forall v$, it has
\begin{equation}\label{eq:eigennewvariable}
\sgn(\hat p(v))\sqrt{|\hat p(v)|}\,|\vec\zeta,\vec\fv\rangle=\beta\sqrt{\Delta}\Pl\zeta(v)|\vec\zeta,\vec\fv\rangle,\ \ \hat V_v|\vec\zeta,\vec\fv\rangle=4\pi \beta\sqrt{\Delta}\Pl^3|\fv(v)|\,|\vec\zeta,\vec\fv\rangle
\end{equation}
According to \eqref{eq:hsnewbasis}, one has
\begin{equation}\label{eq:hsnew2basis}
\begin{aligned}
\hat{h}_{\theta,\Delta}(v)|\vec\zeta,\vec\fv\rangle&=|\vec\zeta+\zeta(v)\fv(v)^{-1}\vec\delta_v,\vec\fv+\vec\delta_v\rangle\\
\hat{h}_{\Phi,\Delta}(v)|\vec\zeta,\vec\fv\rangle&=|\vec\zeta,\vec\fv+\vec\delta_v\rangle.
\end{aligned}
\end{equation}
By this equation, the $\bar\mu$-scheme holonomies are not densely defined on $\ch$, since they cannot act on state $|\vec\zeta,\vec\fv\rangle$ when $\fv(v)= 0$. Moreover, they are not commutative:
\begin{equation}
\begin{aligned}[]
[\hat{h}_{\theta,\Delta}(v),\hat{h}_{\Phi,\Delta}(v)]|\vec\zeta,\vec\fv\rangle=|\vec\zeta+\zeta(v)(\fv(v)+1)^{-1}\vec\delta_v,\vec\fv+2\vec\delta_v\rangle-|\vec\zeta+\zeta(v)\fv(v)^{-1}\vec\delta_v,\vec\fv+2\vec\delta_v\rangle.
\end{aligned}
\end{equation}

Even though the these holonomies $\hat{h}_{\theta,\Delta}(v)$ and $\hat{h}_{\Phi,\Delta}(v)$ are not well-defined for $|\vec\zeta,\vec\fv\rangle$ with  $\fv(v)=0$, the Hamiltonian operator $\hat\fh_{\Delta,v}$ does. For $\fv(v)=0$, it has $$\hat\fh_{\Delta,v}|\vec\zeta,\vec\fv\rangle=0.$$  
For $\fv(v)\neq 0$,  the formulas \eqref{eq:actionh1}, \eqref{eq:actionh2}, \eqref{eq:actionh3} and \eqref{eq:actionh4} can be rewritten as
\begin{eqnarray}
\hat\fh_1(v)|\zeta(v),\fv(v)\rangle&=&-\frac{\Pl}{4\beta\sqrt{\Delta}}\Bigg(\sqrt{|\fv(v)||\fv(v)+2|}\left|\zeta(v)\frac{\fv(v)+1}{\fv(v)},\fv(v)+2\right\rangle\label{eq:actionh12}\\
&&-|\fv(v)|\left|\zeta(v)\frac{\fv(v)-1}{\fv(v)},\fv(v)\right\rangle-|\fv(v)|\left|\zeta(v)\frac{\fv(v)+1}{\fv(v)},\fv(v)\right\rangle\nonumber\\
&&+\sqrt{|\fv(v)||\fv(v)-2|}\left|\zeta(v)\frac{\fv(v)-1}{\fv(v)},\fv(v)-2\right\rangle\Bigg)\nonumber,
\end{eqnarray}
\begin{eqnarray}
\hat\fh_1(v)^\dagger|\zeta(v),\fv(v)\rangle&=&-\frac{\Pl}{4\beta\sqrt{\Delta}}\Bigg(\sqrt{|\fv(v)||\fv(v)+2|}\left|\zeta(v)\frac{\fv(v)+2}{\fv(v)+1},\fv(v)+2\right\rangle\label{eq:actionh12p}\\
&&-|\fv(v)|\left|\zeta(v)\frac{\fv(v)}{\fv(v)+1},\fv(v)\right\rangle-|\fv(v)|\left|\zeta(v)\frac{\fv(v)}{\fv(v)-1},\fv(v)\right\rangle\nonumber\\
&&+\sqrt{|\fv(v)||\fv(v)-2|}\left|\zeta(v)\frac{\fv(v)-2}{\fv(v)-1},\fv(v)-2\right\rangle\nonumber,
\end{eqnarray}
\begin{eqnarray}
\hat\fh_2(v)|\zeta(v),\fv(v)\rangle&=&-\frac{\Pl}{8\beta\sqrt{\Delta}}\Bigg(\sqrt{|\fv(v)||\fv(v)+2|}\left|\zeta(v),\fv(v)+2\right\rangle-\label{eq:actionh22}\\
&&2|\fv(v)|\Big|\zeta(v),\fv(v)\Big\rangle+\sqrt{|\fv(v)||\fv(v)-2|}\left|\zeta(v),\fv(v)-2\right\rangle\Bigg)\nonumber,
\end{eqnarray}
\begin{eqnarray}
\hat{\fh}_3(v)|\zeta(v),\fv(v)\rangle&=&\frac{27\Pl}{2\beta\sqrt{\Delta}} \frac{\fv(v)^2}{\zeta(v)^2} B(\fv(v)) |\zeta(v),\fv(v)\rangle\label{eq:actionh32},
\end{eqnarray}
and
\begin{eqnarray}
\hat\fh_4(v)|\vec\zeta,\vec\fv\rangle&=&\frac{27\Pl\beta^3\sqrt{\Delta}^3}{8}B(\fv(v))\Big[\sgn(\zeta((v+1))\zeta(v+1)^2-\label{eq:actionh42}\\
&&\sgn(\zeta(v))\zeta(v)^2\Big]^2|\vec\zeta,\vec\fv\rangle.\nonumber
\end{eqnarray}

As a well-defined Hamiltonian operator, $\widehat{\mathbf{H}_{\Delta}}$ should be self-adjoint. Noting that the Hilbert space $\mathcal H$ is not separable, we thus need to choose a separable Hilbert subspace $\tilde{\mathcal H}$ which is preserved by $\widehat{\mathbf{H}_{\Delta}}$.  Given $|\vec\zeta,\vec\fv\rangle\in\mathcal H$, a natural choice of the separable subspace is the one spanned by $(\widehat{\mathbf{H}_{\Delta}})^n|\vec\zeta,\vec\fv\rangle$ for all $n\in \mathbb Z_{\geq 0}$. In $\tilde{\mathcal H}$, we choose the domain $D$ of $\widehat{\mathbf{H}_{\Delta}}$ as the subspace consisting of finite linear combinations of the basis $|\vec\zeta,\vec\fv\rangle\in \tilde{\mathcal H}$. Then, let us define a self-adjoint operator in $\tilde{\mathcal H}$ as
$$\hat N=1+\sum_{v\in V(\gamma)}(\hat V_v^2+\hat p(v)^4).$$
It is easy to shown that there exist numbers $c,d\in \mathbb R$ such that
\begin{equation}
\begin{aligned}
\|\widehat{\mathbf{H}_{\Delta}}\psi\|\leq& c \|\hat N\psi\|,\ \forall \psi\in D\\
\left|\langle \widehat{\mathbf{H}_{\Delta}}\psi,\hat N\psi\rangle-\langle\hat N\psi, \widehat{\mathbf{H}_{\Delta}}\psi\rangle \right|\leq &d\|\hat N^{1/2}\psi\|^2,\ \forall \psi\in D,
\end{aligned}
\end{equation} 
where $\langle\cdot,\cdot\rangle$ denotes the inner product in $\tilde{\mathcal H}$. Then according to Theorem X.37 in \cite{ReedSimon2}, $\widehat{\mathbf{H}_{\Delta}}$ defined on $D\subset\tilde{\mathcal H}$ is essentially self-adjoint.

\section{Path integral formulation}\label{sec:six}
Let us define $\hat\fh_i=\sum_v\hat\fh_i(v)$. By \eqref{eq:actionh12}--\eqref{eq:actionh42}, one has the matrix element of $\hat\fh_i$ which reads
\begin{eqnarray}
\langle\vec\zeta_1,\vec\fv_1|\hat\fh_1|\vec\zeta_2,\vec\fv_2\rangle&=&-\frac{\Pl}{4\beta\sqrt{\Delta}}\sum_v \prod_{v'\neq v} \delta_{\zeta_1(v'),\zeta_2(v')}\delta_{\fv_1(v'),\fv_2(v')}\sqrt{|\fv_1(v)\fv_2(v)|}\times\label{eq:matrixh1}\\
&&\Big(\delta_{\zeta_1(v)(\fv_1(v)-2),\zeta_2(v)(\fv_2(v)+1)}\delta_{\fv_1(v),\fv_2(v)+2}-\delta_{\zeta_1(v)\fv_1(v),\zeta_2(v)(\fv_2(v)-1)}\delta_{\fv_1(v),\fv_2(v)}-\nonumber\\
&&\delta_{\zeta_1(v)\fv_1(v),\zeta_2(v)(\fv_2(v)+1)}\delta_{\fv_1(v),\fv_2(v)}+\delta_{\zeta_1(v)(\fv_1(v)+2),\zeta_2(v)(\fv_2(v)-1)}\delta_{\fv_1(v),\fv_2(v)-2}\Big)\nonumber,
\end{eqnarray}
\begin{eqnarray}
\langle\vec\zeta_1,\vec\fv_1|\hat\fh_1^\dagger|\vec\zeta_2,\vec\fv_2\rangle&=&-\frac{\Pl}{4\beta\sqrt{\Delta}}\sum_v\prod_{v'\neq v}\delta_{\zeta_1(v'),\zeta_2(v')}\delta_{\fv_1(v'),\fv_2(v')}\sqrt{|\fv_2(v)||\fv_1(v)|}\times\label{eq:matrixh1p}\\
&&\Big(\delta_{\zeta_1(v)(\fv_1(v)-1),\zeta_2(v)(\fv_2(v)+2)}\delta_{\fv_1(v),\fv_2(v)+2}-\delta_{\zeta_1(v)(\fv_1(v)+1),\zeta_2(v)\fv_2(v)}\delta_{\fv_1(v),\fv_2(v)}-\nonumber\\
&&\delta_{\zeta_1(v)(\fv_1(v)-1),\zeta_2(v)\fv_2(v)}\delta_{\fv_1(v),\fv_2(v)}+\delta_{\zeta_1(v)(\fv_1(v)+1),\zeta_2(v)(\fv_2(v)-2)}\delta_{\fv_1(v),\fv_2(v)-2}\Big)\nonumber,
\end{eqnarray}
\begin{eqnarray}
\langle\vec\zeta_1,\vec\fv_1|\hat\fh_2|\vec\zeta_2,\vec\fv_2\rangle&=&-\frac{\Pl}{8\beta\sqrt{\Delta}}\sum_v\prod_{v'\neq v}\delta_{\zeta_1(v'),\zeta_2(v')}\delta_{\fv_1(v'),\fv_2(v')}\sqrt{|\fv_1(v)||\fv_2(v)|}\times\label{eq:matrixh2}\\
&&\Big(\delta_{\zeta_1(v),\zeta_2(v)}\delta_{\fv_1(v),\fv_2(v)+2}-2\delta_{\zeta_1(v),\zeta_2(v)}\delta_{\fv_1(v),\fv_2(v)}+\delta_{\zeta_1(v),\zeta_2(v)}\delta_{\fv_1(v),\fv_2(v)-2}\Big)\nonumber,
\end{eqnarray}
\begin{eqnarray}
\langle\vec\zeta_1,\vec\fv_1|\hat\fh_3|\vec\zeta_2,\vec\fv_2\rangle&=&\frac{27\Pl}{2\beta\sqrt{\Delta}}\sum_{v} \frac{\fv_2(v)^2}{\zeta_2(v)^2} B(\fv_2(v)) \prod_{v'}\delta_{\zeta_1(v'),\zeta_2(v')}\delta_{\fv_1(v'),\fv_2(v')}\label{eq:matrixh3},
\end{eqnarray}
and
\begin{eqnarray}
\langle\vec\zeta_1,\vec\fv_1|\hat\fh_4|\vec\zeta_2,\vec\fv_2\rangle&=&\frac{27\Pl\beta^3\sqrt{\Delta}^3}{8}\sum_vB(\fv_2(v))\left[\sgn(\zeta_2((v+1))\zeta_2(v+1)^2-\sgn(\zeta_2(v))\zeta_2(v)^2\right]^2\label{eq:matrixh4}\\
&&\prod_{v'}\delta_{\zeta_1(v'),\zeta_2(v')}\delta_{\fv_1(v'),\fv_2(v')}\nonumber.
\end{eqnarray}
Moreover, for the boundary term
$\widehat{H_{\rm bdy}}:=\sqrt{|\hat p(v_b)},$
 one has
\begin{equation}
\langle\vec\zeta_1,\vec\fv_1|\widehat{H_{\rm bdy}}| \vec\zeta_2,\vec\fv_2\rangle=\beta\sqrt{\Delta}\Pl|\zeta_1(v_b)|\prod_{v'}\delta_{\zeta_1(v'),\zeta_2(v')}\delta_{\fv_1(v'),\fv_2(v')}
\end{equation}

In order to simplify the matrix elements of $\hat\fh_i$, consider the space of almost periodic functions and define a functional  $\mu_h$ of these functions by 
\begin{equation}
\mu_h(f)\equiv\int_{\mathbb R}\dd\mu_h(x) f(x):=\lim_{T\to\infty}\frac{1}{2T}\int_{-T}^Tf(x)\dd x
\end{equation}
Then, it can be verified that
\begin{equation}\label{eq:deltaandintegral}
\delta_{\lambda,0}=\int_{\mathbb R}\dd\mu_h(x) e^{i\lambda x},
\end{equation}
where $\delta_{\lambda,0}$ is the Kronecker delta. This formula provide an approach to rewrite the Kronecker-delta functions appearing in the expressions of the matrix elements of $\hat{\fh}_i$, so that we can simplify these expressions to get the final path integral formula. Before proceeding further, let us consider an issue on how to deal with the delta functions taking the form $\delta_{A(\fv_1(v))\,\zeta_1(v),B(\fv_2(v))\,\zeta_2(v)}$, which appears in the expressions \eqref{eq:matrixh1}--\eqref{eq:matrixh4}. The deep idea is to re-express $\delta_{A(\fv_1(v))\zeta_1(v),B(\fv_2(v))\zeta_2(v)}$ such that it takes the form $\delta_{f(\zeta_1(v))-f(\zeta_2(v)),g}$ with $g$ independent of $\zeta_1(v)$ and $\zeta_2(v)$. Then, Lemma \ref{lmm:one} stated below can be used to rewrite the transition amplitude in the standard path integral form, i.e., the form in present of the Lebesgue measure. Thus, we need the identity
\begin{equation}
\begin{aligned}
&\delta_{A(\fv_1(v))\zeta_1(v),B(\fv_2(v))\zeta_2(v)}\\
=&\left(\delta_{\sgn(\zeta_1(v)),\sgn(\zeta_2(v))}\delta_{\sgn(A(\fv_1(v))),\sgn(B(\fv_2(v)))}+\delta_{\sgn(\zeta_1(v)),-\sgn(\zeta_2(v))}\delta_{\sgn(A(\fv_1(v))),-\sgn(B(\fv_2(v)))}\right)\times \\
&\delta_{\ln(|\zeta_2(v)|)-\ln(|\zeta_1(v)|),\ln(|B(\fv_1(v))/A(\fv_2(v))|)}
\end{aligned}
\end{equation}
Taking advantage of this identity and \eqref{eq:deltaandintegral}, we finally get\footnote{here the factors $\delta_{\sgn(\cdots),\sgn(\cdots)}$ are omitted due to the following physical consideration. At first, the limit $N\to\infty$ will be taken as shown in \eqref{eq:five26}. This limit leads to $\zeta_1(v)\to \zeta_2(v)$ and $\fv_1(v)\to\fv_2(v)$.  Thus, terms with the factor $\delta_{\sgn(\zeta_1(v)),-\sgn(\zeta_2(v))}$ can be ignored, and the factor $\delta_{\sgn(\zeta_1(v)),\sgn(\zeta_2(v))}$ can be set to $1$ directly. Second, for the factor, $\delta_{\sgn(A(\fv_1(v))),\sgn(B(\fv_2(v)))}$, after taking limit $N\to \infty$, it becomes $\delta_{\sgn(A(\fv(v))),\sgn(B(\fv(v)))}$. In the present work, we consider the transition amplitude between boundary states such that the critical path, i.e., the path dominating the path integral, does not pass the small-$\fv(v)$ region. For these cases, $\delta_{\sgn(A(\fv(v))),\sgn(B(\fv(v)))}$ can be set to 1, because in these cases the paths, for which $\delta_{\sgn(A(\fv(v))),\sgn(B(\fv(v)))}\neq  1$, have only tiny contribution to the path integral. Indeed, to check this statement, one needs to use the explicit expression of the functions $A$ and $B$.}
\begin{equation}
\begin{aligned}
&\langle \vec\zeta_1,\vec\fv_1|\sum_v\frac{1}{2}(\hat\fh_{\Delta,v}+\hat\fh_{\Delta,v}^\dagger)-\widehat{H_{\rm bdy}}|\vec\zeta_2,\vec\fv_2\rangle=\int_{\mathbb R^{2|V(\gamma)|}} \prod_{v'}\dd\mu_h(b(v'))\dd\mu_h(c(v'))\times\\
&\exp\left[\sum_{v'}i[\ln(\zeta_2(v'))-\ln(\zeta_1(v'))]c(v')+i(\fv_2(v')-\fv_1(v'))b(v')\right]H(\vec\zeta_1,\vec\fv_1,\vec\zeta_2,\vec\fv_2,\vec b,\vec c)
\end{aligned}
\end{equation}
where
\begin{equation}
\begin{aligned}
 &H(\vec\zeta_1,\vec\fv_1,\vec\zeta_2,\vec\fv_2,\vec b,\vec c)=\sum_v\Bigg\{\left(-\frac{\Pl\sqrt{|\fv_1(v)\fv_2(v)|}}{8\beta\sqrt{\Delta}}\right)  \Bigg(\exp\left[i\ln\left(\left|1+\frac{1}{\fv_2(v)}\right|\right)c(v)+2ib(v)  \right]\\
 &-\exp\left[i\ln\left(\left|1-\frac{1}{\fv_2(v)}\right|\right)c(v)  \right]-\exp\left[i\ln\left(\left|1+\frac{1}{\fv_2(v)}\right|\right)c(v)  \right]+\exp\left[i\ln\left(\left|1-\frac{1}{\fv_2(v)}\right|\right)c(v)-2ib(v)  \right]\Bigg)
+\\
&\left(-\frac{\Pl \sqrt{|\fv_1(v)\fv_2(v)|}}{8\beta\sqrt{\Delta}}\right)\Bigg(\exp\left[i\ln\left(\left|1+\frac{1}{\fv_1(v)-1}\right|\right)c(v)+2ib(v)  \right]-\exp\left[i\ln\left(\left|1-\frac{1}{\fv_1(v)+1}\right|\right)c(v)  \right]\\
&-\exp\left[i\ln\left(\left|1+\frac{1}{\fv_1(v)-1}\right|\right)c(v)  \right]+\exp\left[i\ln\left(\left|1-\frac{1}{\fv_1(v)+1}\right|\right)c(v)-2ib(v)  \right]\Bigg)+\\
&\frac{\Pl  \sqrt{|\fv_1(v)\fv_2(v)|}}{2\beta\sqrt{\Delta}}\sin^2(b(v))+\frac{27\Pl \fv_2(v)^2}{2\zeta_2(v)^2\beta\sqrt{\Delta}} B(\fv_2(v))+\\
& \frac{27\Pl\beta^3\sqrt{\Delta}^3}{8}B(\fv_2(v))\left[\sgn(\zeta_2((v+1))\zeta_2(v+1)^2-\sgn(\zeta_2(v))\zeta_2(v)^2\right]^2\Bigg\}-\beta\sqrt{\Delta}\Pl\sqrt{|\zeta_1(v_b)\zeta_2(v_b)|}.
\end{aligned}
\end{equation}
Then the matrix element of $e^{i\epsilon\hat\fh}$, with $\hat{\fh}:=\sum_v\frac{1}{2}(\hat\fh_{\Delta,v}+\hat\fh_{\Delta,v}^\dagger)-\widehat{H_{\rm bdy}}$, reads
\begin{equation}
\begin{aligned}
&\langle \vec\zeta_1,\vec\fv_1|e^{i\epsilon\hat\fh}|\vec\zeta_2,\vec\fv_2\rangle=\langle \vec\zeta_1,\vec\fv_1|(1+i\epsilon\fh)|\vec\zeta_2,\vec\fv_2\rangle\\
=&\int_{\mathbb R^{2|V(\gamma)|}} \prod_{v'}\dd\mu_h(b(v'))\dd\mu_h(c(v'))\\
&\exp\left[\sum_{v'}i(\ln(|\zeta_2(v')|)-\ln(|\zeta_1(v')|))c(v')+i(\fv_2(v')-\fv_1(v'))b(v')\right]\times\\
&\left[1+i\epsilon H(\vec\zeta_1,\vec\fv_1,\vec\zeta_2,\vec\fv_2,\vec b,\vec c)\right]+O(\epsilon^2).
\end{aligned}
\end{equation}
To simplify the last equation, we claim the following theorem
\begin{Theorem}\label{thm:sumtoint}
Given $F(\{\ln(|\zeta(v)|),\fv(v)|\}_{v\in V(\gamma)})$ a Schwartz function. Then
\begin{equation}
\begin{aligned}
&\tilde F(\{\ln(|\zeta(v)|),\fv(v)|\}_{v\in V(\gamma)}):=\sum_{\{\ln(|\zeta(v)|),\fv(v)|\}_{v\in V(\gamma)}} \langle \vec\zeta_1,\vec\fv_1|(1-i\epsilon\fh)|\vec\zeta,\vec\fv\rangle F(\{\ln(|\zeta(v)|),\fv(v)|\}_{v\in V(\gamma)})\\
=&\int_{\mathbb R^{4|V(\gamma)|}}\prod_{v'}\dd b(v')\dd c(v') \frac{\dd \zeta(v')}{\zeta(v')}\dd\fv(v')\\
&\exp\left[\sum_{v'}i(\ln(|\zeta_2(v')|)-\ln(|\zeta_1(v')|))c(v')+i(\fv_2(v')-\fv_1(v'))b(v')\right]\times\\
&\Big(1-i\epsilon H(\vec\zeta_1,\vec\fv_1,\vec\zeta_2,\vec\fv_2,\vec b,\vec c)\Big)F(\{\ln(|\zeta(v)|),\fv(v)|\}_{v\in V(\gamma)})
\end{aligned}
\end{equation}
with $\dd b(v'),\ \dd c(v'),\ \dd \zeta(v'),\ \dd\fv(v')$ being the Lebesgue measure.
\end{Theorem}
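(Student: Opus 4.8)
\emph{Strategy.} The plan is to reduce the statement to one Fourier-type identity relating the measure $\mu_h$ on the Bohr compactification to the Lebesgue measure, since everything dynamical is already in hand. First, note that for fixed $\vec\zeta_1,\vec\fv_1$ the matrix element $\langle\vec\zeta_1,\vec\fv_1|(1-i\epsilon\fh)|\vec\zeta,\vec\fv\rangle$ is nonzero for only finitely many $(\vec\zeta,\vec\fv)$: by \eqref{eq:matrixh1}--\eqref{eq:matrixh4} and the boundary term, $(\vec\zeta,\vec\fv)$ must agree with $(\vec\zeta_1,\vec\fv_1)$ at every vertex except possibly one, $v$, at which $\fv(v)\in\{\fv_1(v),\fv_1(v)\pm 2\}$ and $\zeta(v)$ is pinned to $\zeta_1(v)$ times a rational function of the $\fv$'s; hence $\tilde F$ (which really depends on $\vec\zeta_1,\vec\fv_1$) is a finite sum. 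I would then substitute into this sum the representation already assembled above: the identity part gives $\prod_{v'}\delta_{\zeta_1(v'),\zeta(v')}\delta_{\fv_1(v'),\fv(v')}$, and the term $-i\epsilon\fh$ gives the displayed $\mu_h$-integral over $\vec b,\vec c$ of $\exp[i\sum_{v'}((\ln|\zeta(v')|-\ln|\zeta_1(v')|)c(v')+(\fv(v')-\fv_1(v'))b(v'))]$ times $-i\epsilon H$; using \eqref{eq:deltaandintegral} once more to rewrite the $\delta$'s of the identity part in the same exponential form, the whole summand sits under one $\mu_h$-integral against $1-i\epsilon H$, and the claim becomes that performing the basis sum on it reproduces the Lebesgue integral of the statement.

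\emph{The core identity.} The heart of the matter is the one-variable statement
\[
\sum_{x\in\mathbb R}\Big(\int_{\mathbb R}\dd\mu_h(k)\,e^{ikx}\phi(k)\Big)\psi(x)=\frac{1}{2\pi}\int_{\mathbb R}\dd x\int_{\mathbb R}\dd k\;e^{ikx}\phi(k)\,\psi(x),
\]
for $\phi(k)=\sum_j a_j e^{i\omega_j k}$ a finite trigonometric polynomial and $\psi$ suitably regular (Schwartz will do). I would prove it by expanding $\phi$: on the left the $\mu_h$-integral equals $\sum_j a_j\,\delta_{x+\omega_j,0}$ by \eqref{eq:deltaandintegral}, so the $x$-sum is supported on the finite set $\{-\omega_j\}$ and equals $\sum_j a_j\psi(-\omega_j)$; on the right the $k$-integral equals $\sum_j a_j\,2\pi\,\delta(x+\omega_j)$ as a tempered distribution, whose pairing with $\psi$ gives the same value. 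Tensoring this over the $2|V(\gamma)|$ conjugate pairs $(\ln|\zeta(v)|,c(v))$ and $(\fv(v),b(v))$---with $\psi$ the Schwartz factor coming from $F$ and $\phi$ the trigonometric combination read off from $1-i\epsilon H$---converts the $\mu_h$-integral together with the basis sum into $\int\prod_{v'}\dd c(v')\,\dd(\ln|\zeta(v')|)\,\dd b(v')\,\dd\fv(v')$; and $\dd(\ln|\zeta(v')|)=\dd\zeta(v')/|\zeta(v')|$ yields the measure $\dd\zeta(v')/\zeta(v')$ of the statement, the powers of $2\pi$ and the sign of $\zeta(v')$ being absorbed into the normalization of the measure as in \eqref{eq:firstEq} and in accordance with the preceding footnote's sign prescription. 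It is here that the Schwartz hypothesis on $F$ is used, to make the right-hand side a legitimate (iterated) integral and to license the reorderings involved.

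\emph{Linearizing the multiplicative shifts.} One preparatory manoeuvre precedes the core identity. In \eqref{eq:matrixh1}--\eqref{eq:matrixh4} the variable $\zeta(v)$ enters through Kronecker deltas on \emph{products} $\zeta(v)\cdot(\text{rational function of }\fv)$, not additive combinations, so the exponent $i(\ln|\zeta(v)|-\ln|\zeta_1(v)|)c(v)$ is not present as such. I would invoke the identity stated immediately before the theorem to rewrite each such delta as a delta on $\ln|\zeta(v)|-\ln|\zeta_1(v)|$ (discarding the $\delta_{\sgn,\sgn}$ factors on the physical grounds spelled out in the footnote: the dominant path stays away from small $\fv$ and does not flip the sign of $\zeta$). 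After this, the multiplicative $\zeta$-shift becomes an additive shift of $\ln|\zeta(v)|$ by amounts like $\ln|1\pm1/\fv(v)|$, and the trigonometric polynomial in $c(v)$ accompanying it carries exactly the frequencies present in the explicit $H$; the diagonal pieces $\hat\fh_3,\hat\fh_4$ and the boundary term contribute only through the zero-frequency mode, dressed by the trivial $\prod_{v'}\delta_{\zeta_1(v'),\zeta(v')}\delta_{\fv_1(v'),\fv(v')}$, and the nearest-neighbour coupling in $\hat\fh_4$ between $v$ and $v+1$ enters the value of $H$ but not the shift pattern.

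\emph{Assembly and the main obstacle.} To finish I would combine the pieces---insert the $\mu_h$-representation, linearize via the log-identity, apply the tensored core identity, evaluate---and check that the $\vec b,\vec c$-integrations produce Dirac deltas collapsing the $\vec\zeta,\vec\fv$-integrations so as to return $\tilde F$ term by term (one term per pair $(v,\text{shift})$, cross-checked against \eqref{eq:matrixh1}--\eqref{eq:matrixh4}), with the loci $\zeta(v)=0$ and $\fv(v)=0$ harmless (Lebesgue-null, and $\hat\fh_{\Delta,v}$ annihilates states with $\fv(v)=0$). I expect the bookkeeping of the linearization step to be the main obstacle: one has to verify that the symmetrized operator $\tfrac12(\hat\fh_{\Delta,v}+\hat\fh_{\Delta,v}^\dagger)-\widehat{H_{\rm bdy}}$, once each of its Kronecker deltas has passed through the log-identity, assembles into precisely the exponential kernel $\exp[i\sum_{v'}((\ln|\zeta(v')|-\ln|\zeta_1(v')|)c(v')+(\fv(v')-\fv_1(v'))b(v'))]$ times the $H$ of the statement---in particular that the asymmetric appearance of $\fv_1(v)$ versus $\fv(v)$ inside $H$ is exactly what the split into $\hat\fh_1$ and its adjoint $\hat\fh_1^\dagger$ produces---and that the sign conventions of the footnote are applied consistently throughout.
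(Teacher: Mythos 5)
Your proposal is correct and follows essentially the same route as the paper: your ``core identity'' is precisely Lemma \ref{lmm:one} (proved the same way, by expanding the finite trigonometric polynomial and matching the Kronecker delta from $\int\dd\mu_h\,e^{i\lambda x}$ against the Dirac delta from the Lebesgue $x$-integral via Fourier inversion), and your device of folding the polynomially bounded coefficients $f_k(\fv_1,\fv_2)$ into the Schwartz factor is exactly the content of Lemma \ref{lmm:two}. The remaining steps (the log-linearization of the multiplicative Kronecker deltas, the sign prescription of the footnote, and tensoring over the conjugate pairs at each vertex) are likewise what the paper does before and after stating the theorem.
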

This theorem can deduced straightforwardly by the following lemmas \ref{lmm:one} and \ref{lmm:two}.
\begin{Lemma}\label{lmm:one}
Given a function 
\begin{equation}
f(x,y)=\sum_{\vec\lambda\in\mathcal L}\mu(\vec\lambda,y)e^{ig(\vec\lambda) x}
\end{equation}
where $\mathcal L\subset\mathbb R^N$ is a finite lattice in $\mathbb R^N$, $g(\vec\lambda)$ is a real function of $\vec\lambda$, and $\mu(\vec\lambda,y)$  satisfies that $\mu(\vec\lambda,\cdot)$ is a Schwartz function for each $\vec\lambda$.
Then
\begin{equation}
\sum_{y}\int\dd\mu_h(x)e^{-i (y+\alpha) x}f(x,y)=\frac{1}{2\pi}\int_{-\infty}^\infty  \dd x\int_{-\infty}^\infty\dd y\, e^{-i (y+\alpha) x} f(x,y).
\end{equation}  
with $\alpha$ being a constant.
\end{Lemma}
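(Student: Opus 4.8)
The plan is to expand $f$ in its defining (finite) sum of characters in $x$ and to show that the left- and right-hand sides both collapse to the single explicit quantity $\sum_{\vec\lambda\in\mathcal L}\mu(\vec\lambda,g(\vec\lambda)-\alpha)$.

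For the left-hand side I would first note that, since $\mathcal L$ is finite, the function $x\mapsto e^{-i(y+\alpha)x}f(x,y)=\sum_{\vec\lambda\in\mathcal L}\mu(\vec\lambda,y)\,e^{i(g(\vec\lambda)-y-\alpha)x}$ is a finite linear combination of characters, so the Bohr mean $\int\dd\mu_h(x)$ may be taken term by term, and \eqref{eq:deltaandintegral} gives
\begin{equation}
\int\dd\mu_h(x)\,e^{-i(y+\alpha)x}f(x,y)=\sum_{\vec\lambda\in\mathcal L}\mu(\vec\lambda,y)\,\delta_{g(\vec\lambda)-y-\alpha,\,0}.
\end{equation}
Finiteness of $\mathcal L$ again makes the subsequent interchange of $\sum_{\vec\lambda}$ with $\sum_y$ harmless, and summing over $y$ kills the Kronecker delta, yielding $\sum_{\vec\lambda\in\mathcal L}\mu(\vec\lambda,g(\vec\lambda)-\alpha)$. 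The only non-automatic point here is that $g(\vec\lambda)-\alpha$ must lie in the lattice over which $y$ runs so that the Kronecker delta can actually fire; in the intended application this holds because the $g(\vec\lambda)$ and the admissible values of $y$ differ only by integers, and I would record this as a standing compatibility assumption on $\mathcal L$, $g$ and $\alpha$.

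For the right-hand side the double integral is not absolutely convergent (the $y$-integral of $|f(x,y)|$ is bounded uniformly in $x$ by a Schwartz function, but the remaining $x$-integral diverges), so I would read it as the iterated integral with the $y$-integration performed first. For each fixed $\vec\lambda$, $\int_{\R}\dd y\,\mu(\vec\lambda,y)e^{-iyx}$ is the Fourier transform of the Schwartz function $\mu(\vec\lambda,\cdot)$, hence itself Schwartz in $x$; then $\frac{1}{2\pi}\int_{\R}\dd x\,e^{i(g(\vec\lambda)-\alpha)x}\,\widehat{\mu(\vec\lambda,\cdot)}(x)=\mu(\vec\lambda,g(\vec\lambda)-\alpha)$ by Fourier inversion, and summing the finitely many terms of $\mathcal L$ reproduces exactly the expression obtained on the left. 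Equivalently one may phrase this as the distributional identity $\frac{1}{2\pi}\int\dd x\,e^{i\omega x}=\delta(\omega)$ paired against $\mu(\vec\lambda,\cdot)$, but the iterated-integral formulation is what makes the argument clean.

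The main obstacle is precisely this control of the right-hand side: one must commit to the order of integration (equivalently, justify the distributional reading of $\frac{1}{2\pi}\int\dd x\,e^{i\omega x}$), and it is the Schwartz hypothesis on $\mu(\vec\lambda,\cdot)$ that turns the $y$-integral into a bona fide Fourier transform and renders the ensuing $x$-integral absolutely convergent, while finiteness of $\mathcal L$ keeps every interchange of $\sum_{\vec\lambda}$ with the integrals and sums legitimate. Everything else is bookkeeping, and the analogous reduction for the $\zeta$-type variables (with the $\dd\zeta/\zeta$ measure) is handled by Lemma~\ref{lmm:two}.
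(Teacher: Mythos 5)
Your proof is correct and follows essentially the same route as the paper's: both sides are reduced, using the finiteness of $\mathcal L$ to interchange the sum with the integrals and Fourier inversion applied to the Schwartz functions $\mu(\vec\lambda,\cdot)$, to the common value $\sum_{\vec\lambda\in\mathcal L}\mu(\vec\lambda,g(\vec\lambda)-\alpha)$. Your added compatibility caveat about $g(\vec\lambda)-\alpha$ lying in the range of $y$ is not needed here, since $\sum_y$ runs over all of $\mathbb R$ (the labels of the nonseparable Hilbert space basis), so the Kronecker delta always fires.
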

\begin{proof}
By definition,
\begin{equation}
\frac{1}{2\pi}\int_{-\infty}^\infty\dd x\int_{-\infty}^\infty\dd y e^{-i(y+\alpha)x}f(x,y)=\frac{1}{2\pi}\int_{-\infty}^\infty\dd x\int_{-\infty}^\infty\dd y\sum_{\vec\lambda\in\mathcal L}\mu(\vec\lambda,y)e^{i(g(\vec\lambda) -y-\alpha)x}
\end{equation}
Because $\mathcal L$ is a finite lattice, one exchanges the order between the summation and the integrals to get
\begin{equation}
\frac{1}{2\pi}\int_{-\infty}^\infty\dd x\int_{-\infty}^\infty\dd y e^{-i(y+a)x}f(x,y)=\frac{1}{2\pi}\sum_{\vec\lambda\in\mathcal L}\int_{-\infty}^\infty\dd x\int_{-\infty}^\infty\dd y\,\mu(\vec\lambda,y)e^{i(g(\vec\lambda) -y-a)x}.
\end{equation}
Since $\mu(\lambda,\cdot)$ is Schwartz function for each $\lambda$, by using the inverse Fourier inversion theorem, we have
\be
&&\frac{1}{2\pi}\int_{-\infty}^\infty\dd x\int_{-\infty}^\infty\dd y\,\mu(\vec\lambda,y)e^{i(g(\vec\lambda) -y-\alpha)x}\\
&=&\frac{1}{\sqrt{2\pi}}\int_{-\infty}^\infty\dd x e^{i(g(\vec\lambda))-\alpha)x}\left(\frac{1}{\sqrt{2\pi}}\int_{-\infty}^\infty\dd y\,\mu(\vec\lambda,y)e^{-ix y}\right)\\
&=&\mu(\vec\lambda,g(\vec\lambda)-\alpha).
\ee
Thus
\begin{equation}
\frac{1}{2\pi}\int_{-\infty}^\infty\dd x\int_{-\infty}^\infty\dd y e^{-i(y+\alpha)x}f(x,y)=\sum_{\vec\lambda\in\mathcal L}\mu(\vec\lambda,g(\vec\lambda)-\alpha)
\end{equation}
where the RHS is just $\sum_y\int\dd\mu_h(x)e^{-i x y}f(x,y)$ by definition of $\mu_h(x)$. 
\end{proof}
\begin{Lemma}\label{lmm:two}
Given a function $\ch(x,y)$ taking the form
\begin{equation}
\ch(x,y_1,y_2)=\sum_{k=1}^N f_k(y_1,y_2)e^{ig_k x},
\end{equation}
where we assume that $f_k(y_1,y_2)$ are continuous and, there exist some $\alpha$ and $\beta$ such that  
\begin{equation}\label{eq:assumptionf}
\frac{f_k(y_1,y_2)}{(1+(y_1)^2)^\alpha (1+(y_2)^2)^\beta}<\infty
\end{equation} 
Let $F(y)\in \mathcal S(\mathbb R)$,  with $\mathcal S(\mathbb R)$ being the space of Schwartz functions.
 Then
\begin{equation}\label{eq:lemma2}
\begin{aligned}
&\tilde F(y_1)=\sum_{y_2}\int\dd\mu_h(x) e^{-i(y_2-y_1) x}(1+i\epsilon \ch(x,y_1,y_2) )F(y_2)\\
=&\frac{1}{2\pi}\int_{-\infty}^\infty\dd x\int_{-\infty}^\infty\dd y_2\ (1+i\epsilon \ch(x,y_1,y_2) )F(y_2)
\end{aligned}
\end{equation}
\end{Lemma}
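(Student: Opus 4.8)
The plan is to recognize the statement as a mild generalization of Lemma~\ref{lmm:one}, the only new feature being that the ``weights'' multiplying the exponentials $e^{ig_kx}$ are now merely continuous and polynomially bounded (times a Schwartz factor) instead of Schwartz. First I would reduce to the atomic case by linearity: write
\[
\bigl(1+i\epsilon\,\ch(x,y_1,y_2)\bigr)F(y_2)=F(y_2)+i\epsilon\sum_{k=1}^{N}\bigl[f_k(y_1,y_2)F(y_2)\bigr]e^{ig_kx},
\]
and treat the constant term and each of the $N$ exponential terms separately. For the constant term the identity is exactly Lemma~\ref{lmm:one} with $\mu(\vec\lambda,y_2)=F(y_2)$, $g(\vec\lambda)=0$, $\alpha=-y_1$, which is legitimate since $F\in\mathcal S(\mathbb R)$. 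For the $k$-th term set $G_k(y_2):=f_k(y_1,y_2)F(y_2)$; by the growth bound \eqref{eq:assumptionf} together with $F\in\mathcal S(\mathbb R)$, $G_k$ is continuous and rapidly decreasing, so $G_k\in L^1(\mathbb R)\cap C^0(\mathbb R)$, and one is reduced to proving
\[
\sum_{y_2}\int\dd\mu_h(x)\,e^{-i(y_2-y_1)x}e^{ig_kx}G_k(y_2)=\frac{1}{2\pi}\int_{-\infty}^{\infty}\dd x\int_{-\infty}^{\infty}\dd y_2\,e^{-i(y_2-y_1)x}e^{ig_kx}G_k(y_2),
\]
reading the right-hand side of the lemma as carrying the same oscillatory factor $e^{-i(y_2-y_1)x}$ as the left-hand side.

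Next I would evaluate the two sides directly. On the left, performing the Bohr average first (as in the proof of Lemma~\ref{lmm:one}) and using $\int\dd\mu_h(x)\,e^{i\omega x}=\delta_{\omega,0}$, the factor $e^{i(g_k-y_2+y_1)x}$ collapses to $\delta_{y_2,\,y_1+g_k}$; hence the summand, as a function of $y_2$, is supported on the finite set $\{y_1\}\cup\{y_1+g_k:1\le k\le N\}$, the sum over $y_2$ is finite, and the left-hand side equals $F(y_1)+i\epsilon\sum_{k=1}^{N}G_k(y_1+g_k)$. On the right, I would integrate in $y_2$ first: since $G_k\in L^1$, $\widehat{G_k}(x):=\int_{-\infty}^{\infty}G_k(y_2)e^{-iy_2x}\dd y_2$ is a bounded continuous function, so the inner integral is $e^{iy_1x}\widehat{G_k}(x)$ and the outer one becomes $\tfrac{1}{2\pi}\int_{-\infty}^{\infty}\widehat{G_k}(x)e^{i(y_1+g_k)x}\dd x$, which by Fourier inversion equals $G_k(y_1+g_k)$; the constant term is handled identically by plain Fourier inversion on $\mathcal S(\mathbb R)$ and yields $F(y_1)$. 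Summing the contributions, both the discrete and the continuous expression equal $F(y_1)+i\epsilon\sum_{k}f_k(y_1,y_1+g_k)F(y_1+g_k)$, which is the asserted identity; in particular this displays $\tilde F(y_1)$ explicitly.

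The one genuinely delicate step is the Fourier inversion invoked on the right-hand side for $G_k$: it is continuous and $L^1$ but not Schwartz, so $\widehat{G_k}$ need not be integrable and $\int\dd x$ may converge only as a symmetric limit. I would close this in either of two equivalent ways. \emph{(i)} Note that the weights actually arising from $H$ (built from powers, roots, logarithms and $\sin$) are piecewise smooth, hence $G_k=f_k(y_1,\cdot)F$ is of bounded variation on $\mathbb R$; then the Fourier integral theorem (Jordan/Dini) gives $\lim_{T\to\infty}\tfrac{1}{2\pi}\int_{-T}^{T}\widehat{G_k}(x)e^{iyx}\dd x=G_k(y)$ at every continuity point, in particular at $y=y_1+g_k$. \emph{(ii)} Alternatively, approximate $G_k$ in $L^1(\mathbb R)$ by $G_k^{(n)}\in\mathcal S(\mathbb R)$ with $G_k^{(n)}(y_1+g_k)\to G_k(y_1+g_k)$, apply Lemma~\ref{lmm:one} to each $G_k^{(n)}$, and pass to the limit, using that the left-hand side depends on $G_k$ only through the single value $G_k(y_1+g_k)$ and that the right-hand side is continuous under this approximation. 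Everything else --- the linearity reduction and the identity $\int\dd\mu_h(x)e^{i\omega x}=\delta_{\omega,0}$ --- is routine, so this inversion point is where I expect to spend essentially all of the real effort.
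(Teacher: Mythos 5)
Your proposal follows the same basic route as the paper --- absorb $f_k(y_1,\cdot)$ into the Schwartz factor and reduce term by term to Lemma~\ref{lmm:one} --- but you are more careful precisely where the paper is cavalier. The paper's entire proof is the remark that one ``only needs to verify $\mu(k,y_1,\cdot)\in\mathcal S(\mathbb R)$'' from \eqref{eq:assumptionf} and then invoke Lemma~\ref{lmm:one}; as you correctly observe, continuity plus polynomial boundedness of $f_k$ only makes $G_k=f_k(y_1,\cdot)F$ continuous and rapidly decreasing, not Schwartz, so the Fourier inversion inside Lemma~\ref{lmm:one} does not apply verbatim. Your explicit evaluation of both sides (the Bohr average collapsing to $\delta_{y_2,\,y_1+g_k}$ on the left, Fourier inversion on the right) and your isolation of the inversion step as the only real issue is therefore an improvement on the paper's argument, and you also correctly restore the oscillatory factor $e^{-i(y_2-y_1)x}$ that is missing from the right-hand side of \eqref{eq:lemma2} as printed. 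Of your two closures, (i) is the one that genuinely works: the weights actually arising from $H$ are piecewise smooth, $G_k$ is of bounded variation, and Jordan's theorem gives pointwise inversion at continuity points --- though this imports structure beyond the lemma's stated hypotheses, which is arguably a defect of the lemma rather than of your proof. Closure (ii) as stated has a gap: $L^1$-approximation gives $\widehat{G_k^{(n)}}\to\widehat{G_k}$ uniformly, but that does not control the non-absolutely-convergent outer $x$-integral, so the asserted continuity of the right-hand side under the approximation is essentially the statement to be proved. With (i) as the closure the argument is complete.
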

To prove this lemma, one only needs to verify $\mu(k,y_1,\cdot)\in\mathcal S(\mathbb R)$ by \eqref{eq:assumptionf}. Then, Lemma \ref{lmm:one} can be applied to get \eqref{eq:lemma2}.

According to Theorem \ref{thm:sumtoint}, $\langle \vec\zeta_1,\vec\fv_1|e^{i\epsilon \hat \fh}|\vec\zeta_2,\vec\fv_2\rangle$ can be written as
\begin{equation}\label{eq:expmakesensedist}
\begin{aligned}
&\langle \vec\zeta_1,\vec\fv_1|e^{i\epsilon\fh}|\vec\zeta_2,\vec\fv_2\rangle\\
=&\left(\frac{1}{2\pi}\right)^{|V(\gamma)|}\int_{\mathbb R^{2|V(\gamma)|}} \prod_{v'}\dd b(v')\dd c(v')\\
&\exp\left[\sum_{v'}i(\ln(|\zeta_2(v')|)-\ln(|\zeta_1(v')|))c(v')+i(\fv_2(v')-\fv_1(v'))b(v')\right]\times\\
&\left(1+i\epsilon H(\vec\zeta_1,\vec\fv_1,\vec\zeta_2,\vec\fv_2,\vec b,\vec c)\right)+O(\epsilon^2).
\end{aligned}
\end{equation}
It is noted that the RHS is understood as a functional on the space of Schwartz functions of $\vec\zeta_1$ and $\fv_1$, as stated in Theorem \ref{thm:sumtoint}. With these formulas, let us consider the transition amplitude $A(\vec\zeta_i,\vec\fv_i,\vec\zeta_f,\vec\fv_f,T)$
\begin{equation}\label{eq:five26}
\begin{aligned}
&A(\vec\zeta_i,\vec\fv_i,\vec\zeta_f,\vec\fv_f,T)=\langle\vec\zeta_f,\vec\fv_f|e^{-i\frac{T }{-G\hbar}\hat \fh }|\vec\zeta_i,\vec\fv_i\rangle=\langle\vec\zeta_f,\vec\fv_f|\left(e^{i \frac{T}{G\hbar N} \hat\fh}\right)^N|\vec\zeta_i,\vec\fv_i\rangle\\
=&\lim_{N\to\infty}\sum_{\{\vec\zeta_n,\vec\fv_n\}_{n=1}^N}\prod_{k=0}^N\langle\vec\zeta_{k+1},\vec\fv_{k+1}|e^{i \frac{T}{\Pl^2N}\hat\fh}|\vec\zeta_k,\vec\fv_k\rangle\\
=&\int\prod_{v'}\mathcal D[b(v')]\mathcal D[ c(v')] \mathcal D[\ln(|\zeta(v')|)]\mathcal D[\fv(v')]e^{i\frac{1}{\hbar}S(\vec \zeta,\vec \fv,\vec b,\vec c)}
\end{aligned}
\end{equation}
where $\vec\zeta_0=\vec\zeta_i$, $\vec\zeta_{N+1}=\vec\zeta_f$, $\vec\fv_0=\vec\fv_i$ and $\vec\fv_{N+1}=\vec\fv_f$ and, with excluding the boundary term,
\begin{equation}
\begin{aligned}
S(\vec \zeta,\vec \fv,\vec c,\vec b)=\int_0^T\dd t\mathcal L(\vec \zeta,\vec \fv,\vec c,\vec b)
=&\int_0^T\dd t\, \left(\sum_{v}-\frac{\hbar}{\zeta(v)}\frac{\dd \zeta(v)}{\dd t}c(v)-\hbar\frac{\dd \fv(v)}{\dd t}b(v)+\frac{1}{G} H(\vec\zeta,\vec\fv,\vec c,\vec b)\right)
\end{aligned}
\end{equation}
\begin{equation}
\begin{aligned}
&H(\vec\zeta,\vec\fv,\vec c,\vec b)=\sum_v\Bigg\{\left(-\frac{\Pl |\fv(v)| }{8\beta\sqrt{\Delta}}\right)  \Bigg(\exp\left[i\ln\left(\left|1+\frac{1}{\fv(v)}\right|\right)c(v)+2ib(v)  \right]-\exp\left[i\ln\left(\left|1-\frac{1}{\fv(v)}\right|\right)c(v)  \right]\\
&-\exp\left[i\ln\left(\left|1+\frac{1}{\fv(v)}\right|\right)c(v)  \right]+\exp\left[i\ln\left(\left|1-\frac{1}{\fv(v)}\right|\right)c(v)-2ib(v)  \right]\Bigg)
+\\
&\left(-\frac{\Pl |\fv(v)|}{8\beta\sqrt{\Delta}}\right)\Bigg(\exp\left[i\ln\left(\left|1+\frac{1}{\fv(v)-1}\right|\right)c(v)+2ib(v)  \right]-\exp\left[i\ln\left(\left|1-\frac{1}{\fv(v)+1}\right|\right)c(v)  \right]\\
&-\exp\left[i\ln\left(\left|1+\frac{1}{\fv(v)-1}\right|\right)c(v)  \right]+\exp\left[i\ln\left(\left|1-\frac{1}{\fv(v)+1}\right|\right)c(v)-2ib(v)  \right]\Bigg)+\\
&\frac{ |\fv(v)|\Pl}{2\beta\sqrt{\Delta}}\sin^2(b(v))+\frac{27\Pl \fv(v)^2}{2\zeta(v)^2\beta\sqrt{\Delta}} B(\fv(v))+ \frac{27\Pl\beta^3\sqrt{\Delta}^3}{8}B(\fv(v))\left[\sgn(\zeta((v+1))\zeta(v+1)^2-\sgn(\zeta(v))\zeta(v)^2\right]^2\Bigg\}\\
&-\beta\sqrt{\Delta}\Pl|\zeta(v_b)|.
\end{aligned}
\end{equation}

\section{The effective dynamics}\label{sec:seven}
To consider the classical limit $\hbar\to 0$, we introduce the classical fields as  
\begin{equation}
\begin{aligned}
\clv(v)=4\pi\beta\sqrt{\Delta}\Pl^3\,\fv(v),& &&\clz(v)=\beta\sqrt{\Delta}\Pl\zeta(v)\\
\clb(v)=\frac{1}{4\pi\beta\sqrt{\Delta}\Pl} b(v),& &&\clc(v)=\Pl^2 c(v).
\end{aligned}
\end{equation}
The non-vanishing Poisson brackets between these fields are
\begin{equation}\label{eq:poissondis}
\{\clb(v),\clv(v)\}=G,\ \{\clc(v),\clz(v)\}=G\clz(v).
\end{equation}
In terms of the classical fields, $H$ can be simplified as
\begin{equation}\label{eq:Horigion}
\begin{aligned}
&H(\vec\clz,\vec\clv,\vec\clc,\vec\clb)=\sum_v\Bigg\{\left(-\frac{1}{16\pi\beta^2\Delta\Pl^2}\right) |\clv(v)| \times\\
&\Bigg(\cos\left[\ln\left(\left|1+\frac{4\pi\beta\sqrt{\Delta}\Pl^3}{\clv(v)}\right|\right)\frac{\clc(v)}{\Pl^2}+ 8\pi\beta\sqrt{\Delta}\Pl\clb(v)  \right]-\cos\left[\ln\left(\left|1-\frac{4\pi\beta\sqrt{\Delta}\Pl^3}{\clv(v)}\right|\right)\frac{\clc(v)}{\Pl^2}  \right]\\
&-\cos\left[\ln\left(\left|1+\frac{4\pi\beta\sqrt{\Delta}\Pl^3}{\clv(v)}\right|\right)\frac{\clc(v)}{\Pl^2}  \right]+\cos\left[\ln\left(\left|1-\frac{4\pi\beta\sqrt{\Delta}\Pl^3}{\clv(v)}\right|\right)\frac{\clc(v)}{\Pl^2}-8\pi\beta\sqrt{\Delta}\Pl b(v)  \right]\Bigg)
+\\
&\frac{|\clv(v)|}{8\pi\beta^2 \Delta\Pl^2}\sin^2(4\pi\beta\sqrt{\Delta}\Pl\clb(v))+ \frac{27\clv(v)^2}{32\pi^2\beta\sqrt{\Delta}\clz(v)^2\Pl^3} B\left[\frac{\clv(v)}{4\pi\beta\sqrt{\Delta}\Pl^3}\right]+\\
& \frac{27}{8\beta\sqrt{\Delta}\Pl^3}B\left[\frac{\clv(v)}{4\pi\beta\sqrt{\Delta}\Pl^3}\right]\left[\sgn(\clz((v+1))\clz(v+1)^2-\sgn(\clz((v+1))\clz(v)^2\right]^2\Bigg\}-\beta\sqrt{\Delta}\Pl|\zeta(v_b)|
\end{aligned}
\end{equation}
To investigate the effect of the holonomy correction but ignore the $\hbar$-order correction, we consider the limit $\Pl\to 0$ but $\beta^2\Delta\Pl^2=\text{constant}$. Then $H$ can be simplified to the form
\begin{equation}\label{eq:Hlimit1}
\begin{aligned}
&H(\vec\clz,\vec\clv,\vec\clc,\vec\clb)=\sum_v\Bigg\{\frac{1 }{4\pi \beta^2\Delta\Pl^2}|\clv(v)|\sin\left[4\pi\beta\sqrt{\Delta}\Pl\clb(v)\right]\sin\left[\frac{4\pi\beta\sqrt{\Delta}\Pl}{\clv(v)}\clc(v)+4\pi\beta\sqrt{\Delta}\Pl\clb(v)  \right]+\\
&\frac{1}{8\pi\beta^2\Delta\Pl^2}|\clv(v)|\sin^2(4\pi\beta\sqrt{\Delta}\Pl\clb(v))+\frac{|\clv(v)|}{8\pi\clz(v)^2}+ \frac{\pi}{2}\frac{1}{|\clv(v)|}\left[\sgn(\clz(v+1))\clz(v+1)^2-\sgn(\clz(v))\clz(v)^2\right]^2\Bigg\}-|\clz(v_b)|.
\end{aligned}
\end{equation}
where it is used that
\begin{equation}
\begin{aligned}
\frac{1}{\Pl^2}\ln\left(\left|1\pm \frac{4\pi\beta\sqrt{\Delta}\Pl^3}{\clv(v)}\right|\right)&\cong \pm \frac{4\pi\beta\sqrt{\Delta}\Pl}{\clv(v)}\\
\frac{1}{\Pl^3}B\left[\frac{\clv(v)}{4\pi\beta\sqrt{\Delta}\Pl^3}\right]&\cong \frac{4\pi\beta\sqrt{\Delta}}{27|\clv(v)|}.
\end{aligned}
\end{equation}
It is worth noting that the RHS of \eqref{eq:Hlimit} returns to that of \eqref{eq:discreteh} with the assignment
\begin{equation}\label{eq:relationnewold}
\begin{aligned}
&\theta(v)=\frac{\clc(v)+\clb(v)\clv(v)}{2 \sgn(\clz(v))\clz(v)^2},\ &&p(v)=\sgn(\clz(v))\clz(v)^2,\\
 &\Phi(v)=4\pi\clz(v)\clb(v),\ &&\Pi(v)=\frac{\clv(v)}{4\pi\clz(v)}. 
\end{aligned}
\end{equation}

In the following discussion, we are only concerned  about the dynamics for the bulk vertices. Thus, we omit the boundary term in \eqref{eq:Hlimit1} and redefine $H(\vec\clz,\vec\clv,\vec\clc,\vec\clb)$ as
\begin{equation}\label{eq:Hlimit}
\begin{aligned}
&H(\vec\clz,\vec\clv,\vec\clc,\vec\clb)=\sum_v\Bigg\{\frac{1 }{4\pi \beta^2\Delta\Pl^2}|\clv(v)|\sin\left[4\pi\beta\sqrt{\Delta}\Pl\clb(v)\right]\sin\left[\frac{4\pi\beta\sqrt{\Delta}\Pl}{\clv(v)}\clc(v)+4\pi\beta\sqrt{\Delta}\Pl\clb(v)  \right]+\\
&\frac{1}{8\pi\beta^2\Delta\Pl^2}|\clv(v)|\sin^2(4\pi\beta\sqrt{\Delta}\Pl\clb(v))+\frac{|\clv(v)|}{8\pi\clz(v)^2}+ \frac{\pi}{2}\frac{1}{|\clv(v)|}\left[\sgn(\clz(v+1))\clz(v+1)^2-\sgn(\clz(v))\clz(v)^2\right]^2\Bigg\}.
\end{aligned}
\end{equation}

\subsection{The equations of motion}
According to the above discussions, we get an effective Hamiltonian as
\begin{equation}\label{eq:Heffdis}
\begin{aligned}
\efh(\vec\clz,\vec\clv,\vec\clc,\vec\clb)=-\frac{1}{G}H(\vec\clz,\vec\clv,\vec\clc,\vec\clb).
\end{aligned}
\end{equation}
Due to the Poisson brackets \eqref{eq:poissondis}, the EOMs are
\begin{equation}\label{eq:EOM}
\begin{aligned}
\frac{\dd \clv(v)}{\dd t}=&\frac{ \clv(v) \left(2 \sin \left[2 \fb(v) +\fc(v)\right]+\sin \left[2\fb(v)\right]\right)}{2 \beta  \sqrt{\Delta }\Pl}\\
\frac{\dd \clb(v)}{\dd t}=&-\frac{\sin(\fb(v))}{\beta\sqrt{\Delta}\Pl} \Bigg[\frac{2 \sin \left[\fb(v)+\fc(v)\right]+\sin \left[\fb(v)\right] }{8 \pi  \beta  \sqrt{\Delta}  \Pl}-\frac{\clc(v)}{\clv(v)} \cos \left[\fb(v)+\fc(v)\right] \Bigg]\\
&-\frac{1}{8\pi\clz(v)^2}+\frac{ \pi}{2} \frac{1}{ \clv(v)^2} \left(\clz(v)^2-\clz(v+1)^2\right)^2\\
\frac{\dd \clz(v)}{\dd t}=&\frac{ \clz(v) \sin \left[\fb(v)\right] \cos \left[\fb(v)+\fc(v)\right]}{\beta  \sqrt{\Delta } \Pl}\\
\frac{\dd \clc(v)}{\dd t}=&\frac{\clv(v)}{4 \pi  \clz(v)^2}+2\pi\clz(v)^2  \left(\frac{\clz(v+1)^2-\clz(v)^2}{\clv(v)}-\frac{\clz(v)^2-\clz(v-1)^2}{\clv(v-1)}\right)
\end{aligned}
\end{equation}
where
\begin{equation}
\begin{aligned}
\fb(v)=4\pi\beta\sqrt{\Delta}\Pl \clb(v),\ \fc(v)=4\pi\beta\sqrt{\Delta}\Pl \frac{\clc(v)}{\clv(v)}.
\end{aligned}
\end{equation}
In \eqref{eq:EOM}, we have assumed that $\zeta(v)>0$ without loss of generality, since, as seen below, $\sgn(\zeta(v))$ is kept along the concerning dynamical trajectary. 

In order to discuss the continuous limit of the EOMs \eqref{eq:EOM}, we introduce the continuous variables $\conv(x)$, $\conb(x)$, $\conz(x)$ and $\conc(x)$, which are related to the classical variables $\clv(v)$, $\clb(v)$, $\clz(v)$ and $\clc(v)$ via
\begin{equation}\label{eq:relationdiscon}
\begin{aligned}
&\clv(v)=\int_{e(v)}\dd x\, \conv(x),\ &&\clb(v)=\conb({\rm mid}_{e(v)})\\
&\clz(v)= \conz({\rm mid}_{e(v)}),\ &&\clc(v)=\int_{e(v)}\dd x\,\conc(x).
\end{aligned}
\end{equation}
With the continuous variables, \eqref{eq:relationnewold} is rewritten as
\begin{equation}\label{eq:relationnewoldp}
\begin{aligned}
&K_x(x)=\frac{\conc(x)+\conb(x)\conv(x)}{2 \conz(x)},\ &&E^x(x)=\sgn(\conz(x))\conz(x)^2,\\
&K_\varphi(x)=4\pi\conz(x)\conb(x),\ &&E^\varphi(x)=\frac{\conv(x)}{4\pi\conz(x)}. 
\end{aligned}
\end{equation}
Moreover, substituting the continuous variables into the EOMs \eqref{eq:EOM} and considering the continuous limit of the lattice $\gamma$, we get
\begin{equation}\label{eq:EOMcon}
\begin{aligned}
\partial_t \conv&=\frac{ \conv \left(2 \sin \left[2 \tilde\fb +\tilde\fc\right]+\sin \left[2\tilde\fb\right]\right)}{2 \beta  \sqrt{\Delta }\Pl}\\
\partial_t \conb&=-\frac{\sin(\tilde\fb)}{\beta\sqrt{\Delta}\Pl} \Bigg(\frac{2 \sin \left[\tilde\fb+\tilde\fc\right]+\sin \left[\tilde\fb\right] }{8 \pi  \beta  \sqrt{\Delta}  \Pl}-\frac{\conc}{\conv} \cos \left[\tilde\fb+\tilde\fc\right] \Bigg)-\frac{1}{8\pi\conz{}^2}+\frac{ \pi}{2} \frac{ \big(\partial_x(\conz{}^2)\big)^2}{ \conv{}^2}\\
\partial_t \conz&=\frac{ \conz\sin \left[\tilde\fb\right] \cos \left[\tilde\fb+\tilde\fc\right]}{\beta  \sqrt{\Delta } \Pl}\\
\partial_t \conc&=\frac{\conv}{4\pi  \conz{}^2}+2 \pi  \conz{}^2 \partial_x\left(\frac{1}{\conv}\partial_x(\conz{}^2) \right)
\end{aligned}
\end{equation}
where
\be
\tilde\fb(x)=4\pi\beta\sqrt{\Delta}\Pl \conb(x),\ \tilde\fc(x)=4\pi\beta\sqrt{\Delta}\Pl \frac{\conc(x)}{\conv(x)}.
\ee
The same EOMs as \eqref{eq:EOMcon} can be obtained if one consider a system of the phase space of $(\conv(x),\conb(x),\conz(x),\conc(x))$, with the non-vanishing Poisson brackets
\begin{equation}\label{eq:poissoncon}
\begin{aligned}
\{\conb(x),\conv(y)\}=G\delta(x,y),\ \{\conc(x),\conz(y)\}=G\conz(x)\delta(x,y)
\end{aligned}
\end{equation}
and the Hamiltonian
\begin{equation}\label{eq:Hlimitcon}
\begin{aligned}
&\tilde H_{\rm eff}=-\frac{1}{G}\int_{\mathbb R}\Bigg\{\frac{1 }{4\pi \beta^2\Delta\Pl^2}|\conv(x)|\sin\left[4\pi\beta\sqrt{\Delta}\Pl\conb(x)\right]\sin\left[\frac{4\pi\beta\sqrt{\Delta}\Pl}{\conv(x)}\conc(x)+4\pi\beta\sqrt{\Delta}\Pl\conb(x)  \right]+\\
&\frac{1}{8\pi\beta^2\Delta\Pl^2}|\conv(x)|\sin^2(4\pi\beta\sqrt{\Delta}\Pl\conb(x))+\frac{|\conv(x)|}{8\pi\conz(x)^2}+ \frac{\pi}{2}\frac{1}{|\conv(x)|}\left(\partial_x\left(\sgn(\conz(x))\conz(x)^2\right)\right)^2\Bigg\}.
\end{aligned}
\end{equation}
In this sense, this system described by \eqref{eq:poissoncon} and \eqref{eq:Hlimitcon} are the continuous limit of effective dynamics based on the lattice $\gamma$ and encoded in \eqref{eq:poissondis} and \eqref{eq:Heffdis}. {Since \eqref{eq:Hlimitcon} is the same as the effective Hamiltonian used in \cite{Han:2020uhb} if \eqref{eq:relationnewoldp} is substituted, \eqref{eq:EOMcon} is the same as the EOMs used in \cite{Han:2020uhb}.
}

\subsection{A solution to the EOMs}
The continuous effective Hamiltonian \eqref{eq:Hlimitcon} returns to the classical Hamiltonian \eqref{eq:clH} in the low curvature region where $\conb(x)\ll 1$ and $\conc(c)/\conv(x)\ll 1$. As is known in \cite{Han:2020uhb}, solving the EOMs generated by the classical Hamiltonian  \eqref{eq:clH} gives the Schwarzschild metric in Lem\^aitre coordinates. For this solution, all of the dynamical variables depend only on $x-t$, which represents homogeneity of the interior and the static feature of the exterior of the Schwarzschild solution. 
In this section, we are also concerned about the solutions depending only one $x-t$, that is, solutions taking the form
\begin{equation}\label{eq:ansatz}
\conv(x,t)=\conv(x-t),\ \conb(x,t)=\conb(x-t),\ \conz(x,t)=\conz(x-t),\ \conc(x,t)=\conc(x-t).
\end{equation}

Before proceeding further, it is helpful to investigate constants of motion in the effective dynamics. As mentioned in \eqref{eq:vx}, $\cv(x)$ is a conversed charge in the classical dynamics. Fortunately, this feature is kept in the effective dynamics, because of $0=\{\cv(x),\tilde H_{\rm eff}\}.$
In terms of the continuous variables, $\cv(x)$ is
\begin{equation}
\cv(x)=\frac{\conc(x)\partial_x\conz(x)}{\conz(x)}-\conv(x)\partial_x\conb(x)
\end{equation}
Since we are concern on the effective solutions which can recover the Schwarzschild metric in the low curvature region,
it has
\begin{equation}\label{eq:vanishvneff}
\cv(x)=\frac{\conc(x)\partial_x\conz(x)}{\conz(x)}-\conv(x)\partial_x\conb(x)=0
\end{equation}  
with recalling the statements below \eqref{eq:vx}.

Substitute the ansatz \eqref{eq:ansatz} in to the EOMs \eqref{eq:EOMcon} and \eqref{eq:vanishvneff}. With denoting $y=x-t$, we have
\begin{eqnarray}
-\frac{\dd \conv(y)}{\dd y}&=&\frac{ \conv(y) \left(2 \sin \left[2 \tilde\fb(y) +\tilde\fc(y)\right]+\sin \left[2\tilde\fb(y)\right]\right)}{4 \beta  \sqrt{\Delta }\Pl}\label{eq:EOMansatz1}\\
-\frac{\dd  \conb(y)}{\dd y}&=&-\frac{\sin(\tilde\fb(y))}{\beta\sqrt{\Delta}\Pl} \Bigg(\frac{2 \sin \left[\tilde\fb(y)+\tilde\fc(y)\right]+\sin \left[\tilde\fb(y)\right] }{8 \pi  \beta  \sqrt{\Delta}  \Pl}-\frac{\conc(y)}{\conv(y)} \cos \left[\tilde\fb(y)+\tilde\fc(y)\right] \Bigg)\label{eq:EOMansatz2}\\
&&-\frac{1}{8\pi\conz(y){}^2}+\frac{ \pi}{2} \frac{ 1}{ \conv(y){}^2}\Big(\frac{\dd \conz(y){}^2}{\dd y}\Big)^2\nonumber\\
-\frac{\dd  \conz(y)}{\dd y}&=&\frac{ \conz(y)\sin \left[\tilde\fb(y)\right] \cos \left[\tilde\fb(y)+\tilde\fc(y)\right]}{\beta  \sqrt{\Delta } \Pl}\label{eq:EOMansatz3}\\
-\frac{\dd\conc(y)}{\dd y}&=&\frac{\conv(y)}{4 \pi  \conz(y){}^2}+2\pi \conz(y){}^2 \frac{\dd}{\dd y}\left(\frac{1}{\conv(y)}\frac{\dd\conz(y){}^2}{\dd y} \right)\label{eq:EOMansatz4}
\end{eqnarray}
and
\begin{equation}\label{eq:diffansatz}
\frac{\conc(y)}{\conz(y)}\frac{\dd \conz(y)}{\dd y}-\conv(y)\frac{\dd\conb(y)}{\dd y}=0
\end{equation}
To get the solutions, the equations \eqref{eq:EOMansatz1}, \eqref{eq:EOMansatz3}, \eqref{eq:EOMansatz4} and \eqref{eq:diffansatz} are chosen. A set of numerical results is shown in Fig. \ref{fig:solution}. 
Since \eqref{eq:EOMansatz2} is used to get the solution, we can use it to check the accuracy of our solutions. Substituting the numerical solutions achieved from  \eqref{eq:EOMansatz1}, \eqref{eq:EOMansatz3}, \eqref{eq:EOMansatz4} and \eqref{eq:diffansatz} into \eqref{eq:EOMansatz2}, we have the residuals plotted in Fig. \ref{fig:error}. 
According to the numerical results, for $y=x-t\to-\infty$, the variables behaves as
\begin{equation}
\begin{aligned}
\conv(x-t)=e^{\fa+\fb|x-t|\Pl^{-1}},\ \conb(x-t)=\fc,\ \conz(x-t)=r_0,\ \conc(x-t)=\fd
\end{aligned}
\end{equation} 
where $\fa$, $\fb$, $\fc$, $r_0$ and $\fd$ are constants. Thus, at $y\to -\infty$ is the spacetime is diffeomorphism to Nariai geometry taking $\mathrm{dS}_2\times S^2$ metric
\begin{equation}
\dd s^2=-\dd t^2+\left(\frac{\conv(x,t)}{\conz{}^2}\right)^2\dd x{}^2+\conz{}^2\dd\Omega^2,
\end{equation}
where the metric in terms of the variables $\conv$ and $\conz$ is given in \cite{Chiou:2012pg,Han:2020uhb}. This result is the same as the results given in \cite{Han:2020uhb}, which is not surprising because the equations  \eqref{eq:EOMansatz1}, \eqref{eq:EOMansatz2}, \eqref{eq:EOMansatz3} and \eqref{eq:EOMansatz4} are the same as that used in \cite{Han:2020uhb}. Thus one can refer to \cite{Han:2020uhb} for more details on this solution. According to the analysis there, both the area of $\mathbb S^2$ and the $\mathrm{dS}$ radius of the Nariai geometry obtained in this model are of quantum size.
In other words, the current model predicts a quantum final fate of Schwarzschild BH. However, it should be emphasized that this is not credible always, which will be shown below by discussing the scope of the continuous effective descriptions.

\begin{figure}
\centering
\includegraphics[width=0.8\textwidth]{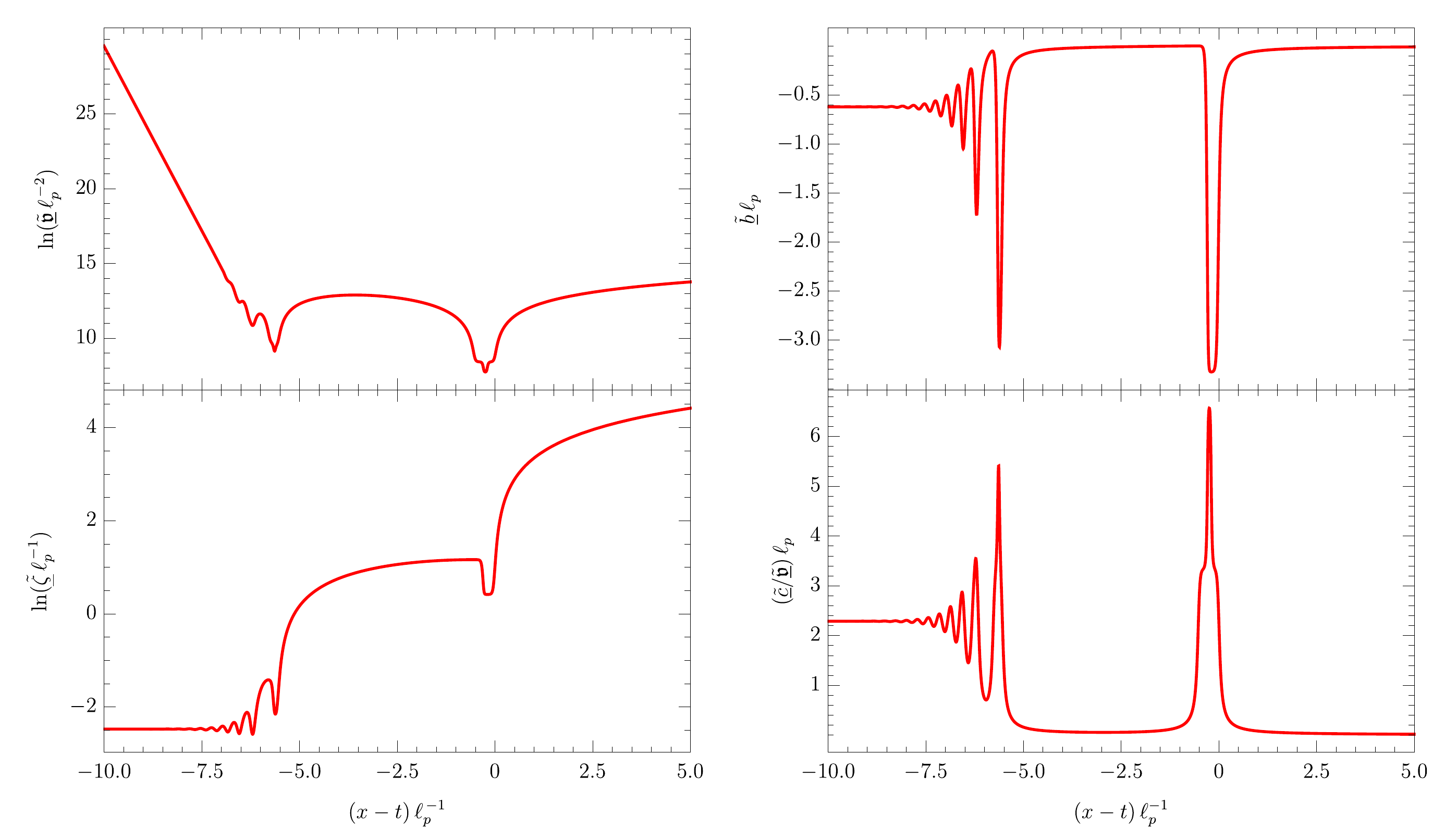}
\caption{Solutions to \eqref{eq:EOMansatz1}, \eqref{eq:EOMansatz3}, \eqref{eq:EOMansatz4} and \eqref{eq:diffansatz}. The initial data are chosen such that  $\conv(y_0)=6\pi F_0 y_0$, $\conb(y_0)=-\frac{1}{6\pi y_0}$, $\conz(y_0)=\left(\frac{3}{2} \sqrt{F_0} y_0\right)^{2/3}$, $\conc(y_0)=\frac{3F_0}{2}$ and $\conz'(y_0)=\frac{2}{3}\left(\frac{3}{2} \sqrt{F_0} y_0\right)^{-1/3}$ with $y_0=10^5\Pl$ and $F_0=10^4\Pl$. The initial data are choose by considering the Schwarzschild solution with $2GM=F_0$  in Lem\^aitre coordinate at $x-t=y_0$.  The parameters are set to be $\Delta=0.1$ and $\beta=0.2375$. }
\label{fig:solution}
\end{figure}

\begin{figure}[h]
\centering
\includegraphics[width=0.5\textwidth]{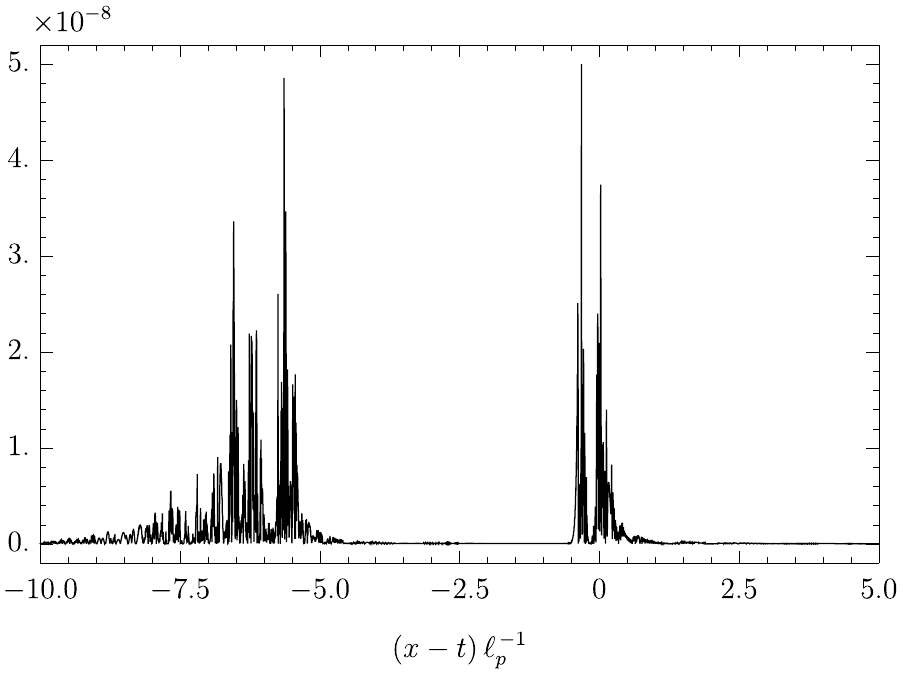}
\caption{The residuals of \eqref{eq:EOMansatz2} by substituting the numerical solutions shown in Fig. \ref{fig:solution}}
\label{fig:error}
\end{figure}

Let us discuss the scope of continuous effective differential equations \eqref{eq:EOMansatz1}--\eqref{eq:EOMansatz4}. At first, the derivation from \eqref{eq:Horigion} to \eqref{eq:Hlimit} requires
\begin{equation}
\clv(v)\gg 4\pi\beta\sqrt{\Delta}\Pl^3,\ \forall v\in V(\gamma).
\end{equation}
According to \eqref{eq:relationdiscon}, this means
\begin{equation}\label{eq:criterion}
\int_{e(v)}\dd x\,\conv(x)\gg 4\pi\beta\sqrt{\Delta}\Pl^3, \forall v\in V(\gamma).
\end{equation}
Let $\conv{}_{\rm min}$ be the smallest value of $\conv(x,t)$. Since $\int_{e(v)}\dd x\,\conv(x)>\delta x\, \conv{}_{\rm min}$ where $\delta x $ denotes the coordinate length of $e(v)$, it is concluded that \eqref{eq:criterion} can be satisfied provided that
\begin{equation}\label{eq:criterion1}
\delta x\, \conv{}_{\rm min}\gg 4\pi\beta\sqrt{\Delta}\,\Pl^3.
\end{equation}

\begin{figure}[h]
\centering
\includegraphics[width=0.5\textwidth]{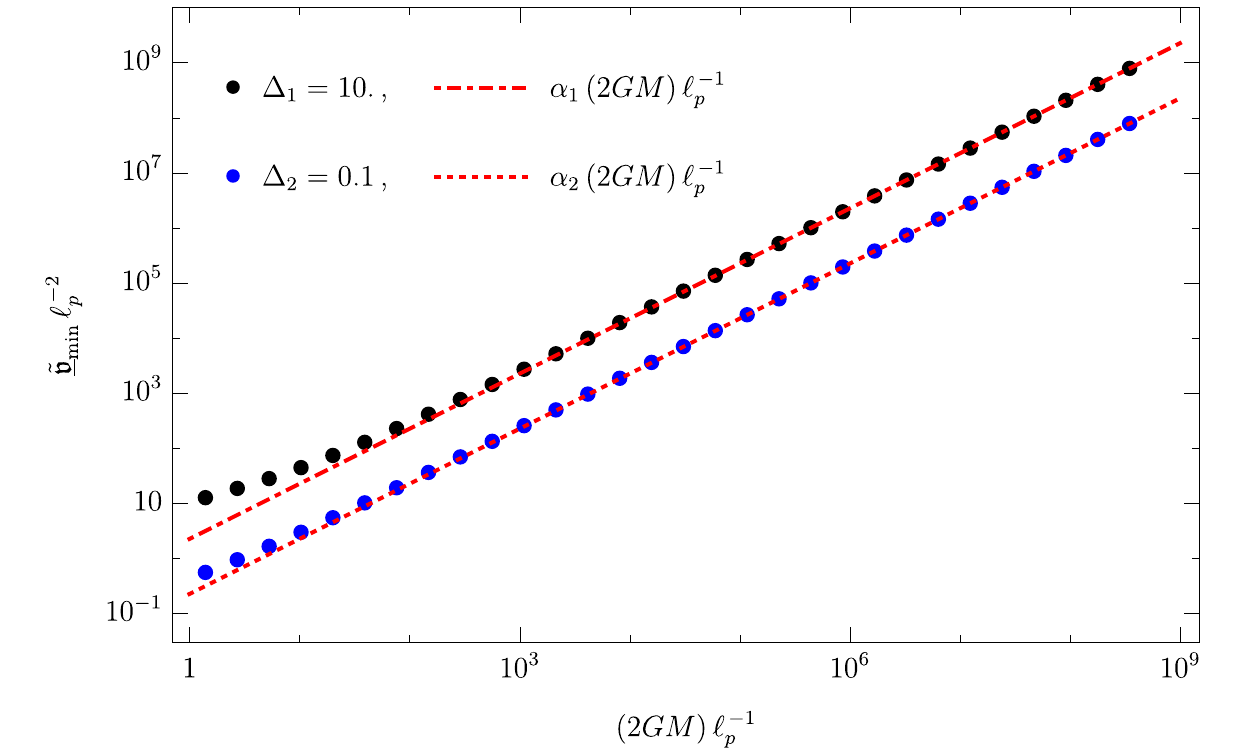}
\caption{Dependence of $\conv_{\rm min}$ on $M$ for various values of $\Delta$. As shown in the figure, for $M\gg 1$, $\conv_{\rm min}$ can be well approximated by \eqref{eq:approximation0}. In this plot, the values of $\alpha_i$, corresponding to $\Delta_i$, with $i=1,\ 2$ are computed by \eqref{eq:alpha0}. The parameters are chosen as $\beta=0.2375$. 
}
\label{fig:vmin}
\end{figure}

The solutions to \eqref{eq:EOMansatz1}--\eqref{eq:EOMansatz4} is determined by the initial data.  
Meanwhile, the initial data are chosen such that in the low curvature region the effective dynamics returns to the Schwarzschild metric which depends only on its mass $M$. As a consequence, the value  $\conv{}_{\rm min}$, as the minimal value of the solution, is related to the mass of the Schwarzschild spacetime to set the initial data. According to our numerical computation shown in Fig. \ref{fig:vmin}, the corelation between $\conv_{\rm min}$ and $M$, for $M\gg 1$, can be well approximated by a formula of the form
\begin{equation}\label{eq:approximation0}
\conv_{\rm min}\cong \alpha (2GM)\Pl.
\end{equation}
where $\alpha$ is some coefficient depends only on $\beta\sqrt{\Delta}\Pl$, i.e. $\alpha=\alpha(\beta\sqrt{\Delta}\Pl)$, because the coefficients in \eqref{eq:EOMansatz1}--\eqref{eq:EOMansatz4} depends on $\beta\sqrt{\Delta}\Pl$ merely. The dependence of $\alpha$ on $\beta\sqrt{\Delta}\Pl$ can be investigated numerically, which, as shown in Fig. \ref{fig:alpha0}, is
\begin{equation}\label{eq:alpha0}
\alpha(\beta\sqrt{\Delta}\,\Pl)\cong\alpha_0 \beta\sqrt{\Delta}
\end{equation}
with $\alpha_0\cong3.0000$. Therefore, for $M\gg 1$, $\conv_{\rm min}$ is approximated by
\begin{equation}\label{eq:approximation}
\conv_{\rm min}\cong \alpha_0\beta\sqrt{\Delta}(2GM) \Pl.
\end{equation}
\begin{figure}
\centering
\includegraphics[width=0.5\textwidth]{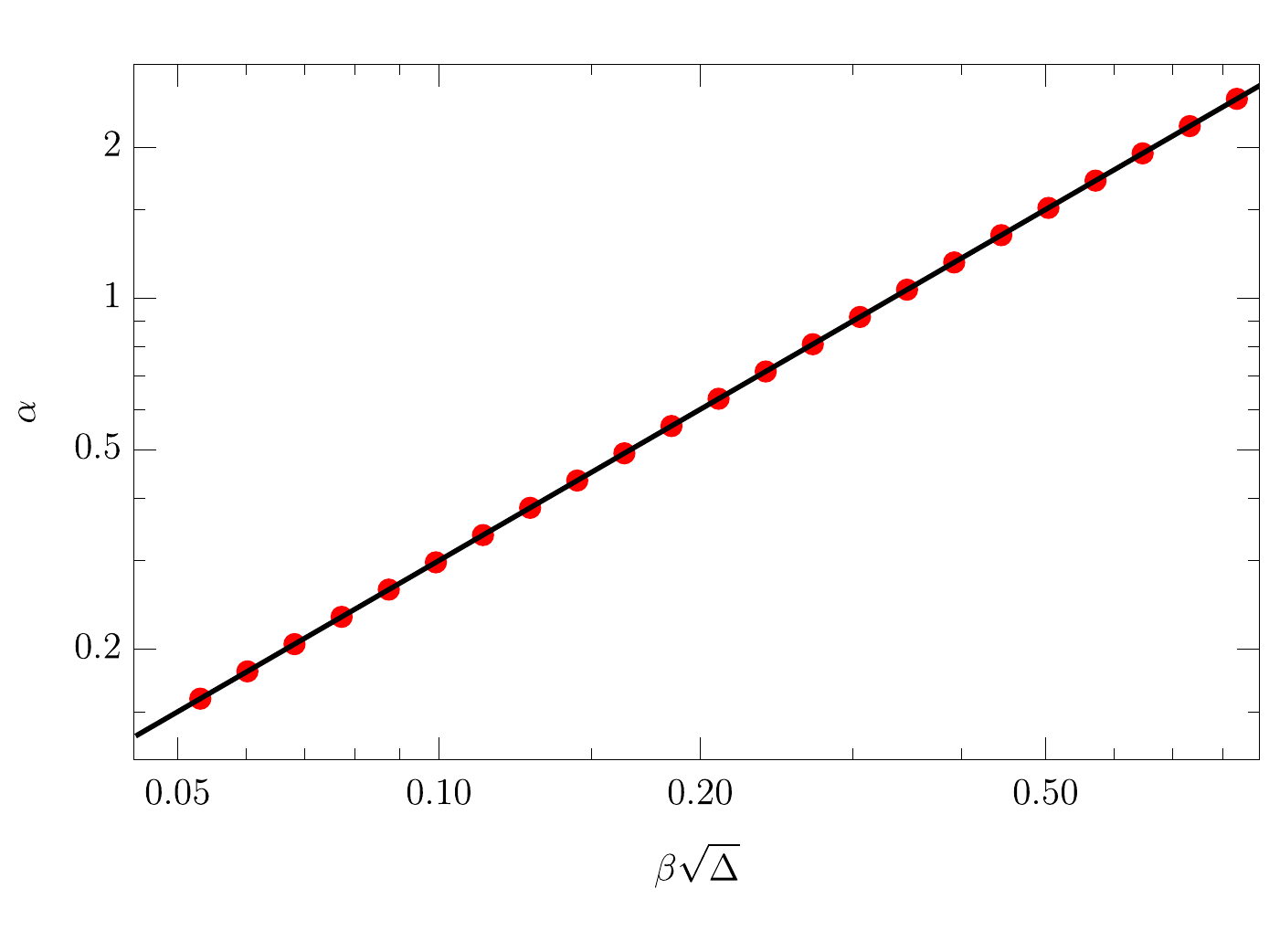}
\caption{Plot of the values $\alpha$, introduced in \eqref{eq:approximation}, depending on $\beta\sqrt{\Delta}\,\Pl$. As shown in the figure, $\alpha$ depends on $\beta\sqrt{\Delta}$ linearly. More explicitly, $\alpha=\alpha_0\beta\sqrt{\Delta}$ with $\alpha_0=3.0000$ according to the numerical results shown here.  }
\label{fig:alpha0}
\end{figure}

Substituting \eqref{eq:approximation} into $\eqref{eq:criterion1}$, we finally obtain that 
\begin{equation}\label{eq:scope}
\frac{2GM}{\Pl}\gg \frac{4\pi}{\alpha_0}\,\frac{\Pl}{\delta x}
\end{equation}
which presents the scope of the continuous effective EOMs. It should be emphasized that $\delta x$ is the coordinate length of each edge in a specific coordinate such that the effective metric depends only on $x-t$ and, takes the limit of Schwarzschild metric in Lem\^aitre coordinate in the low curvature region. Moreover, the above discussion assumed that the lattice $\gamma$ is equidistant in this specific coordinate.

According to \eqref{eq:scope},  $\delta x$ should take large values to enlarge the scope of the continuous effective EOMs. However, since we used differentials to approximate
the differences in \eqref{eq:EOM}, a tension arises that $\delta x$ cannot be too large. To see more explicitly how this tension limits $\delta x$, one notices that the approximations of differences by differentials in \eqref{eq:EOM} are 
\begin{equation}
\begin{aligned}
\frac{1}{\clv(v)}\left(\clz(v+1)^2-\zeta(v)^2\right)&\cong \left.\frac{1}{\conv(x)}\partial_x(\conz(x)^2)\right|_{x={\rm mid}_{e(v)}}\\
 \left(\frac{\clz(v+1)^2-\clz(v)^2}{\clv(v)}-\frac{\clz(v)^2-\clz(v-1)^2}{\clv(v-1)}\right)&\cong \left.\partial_x\left(\frac{1}{\conv(x)}\partial_x(\conz(x){}^2) \right)\right|_{x={\rm mid}_{e(v)}}.
\end{aligned}
\end{equation} 
The last equations omit the dependence of all variables on $t$ for simplicity. That is, we are discussing the validity these two equations for a fixed $t$. 
 Consider the first equation as an example.
According to \eqref{eq:relationdiscon}, one has
\begin{equation}\label{eq:taylorseries}
\begin{aligned}
\frac{1}{\clv(v)}\left(\clz(v+1)^2-\zeta(v)^2\right)=\left(\frac{\partial_x(\conz{}^2)({\rm mid}_{e(v)})}{\conv({\rm mid}_{e(v)})}+\frac{1}{2}\frac{\partial_x^2(\conz{}^{2})(c')\,\delta x }{\conv({\rm mid}_{e(v)}) } \right)\frac{1}{1+\frac{\partial_x\conv{}(c)\delta x}{2\conv({\rm mid}_{e(v)})}}
\end{aligned}
\end{equation}
where $c$ and $c'$ are some points in $e(v)$ and the mean-value form of the remainder of the Taylor series is applied. As mentioned above, in the region $x-t\gg 1$, the solutions to the effective continuous EOMs converge to the Schwarzschild metric in Lem\^aitre coordinate, i.e.
\begin{equation}
\begin{aligned}
\conz(x,t)=&\left(\frac{3}{2}\sqrt{F}(x-t)\right)^{2/3}\\
\conv(x,t)=&6\pi F (x-t)
\end{aligned}
\end{equation}
with some constant $F$. Substituting these expressions into \eqref{eq:taylorseries}, we have
\begin{equation}
\frac{1}{\clv(v)}\left(\clz(v+1)^2-\zeta(v)^2\right)=\frac{\partial_x(\conz{}^2)({\rm mid}_{e(v)})}{\conv({\rm mid}_{e(v)})}+o(\delta x/(x-t)).
\end{equation}
Thus in the region $x-t\gg 1$, the differentials can well approximate the differences if $\delta x\sim 1$. Moreover, because $\conz(x,t)$ keeps constant in the region $x-t\ll -1$, the differentials can well approximate the differences in this region for any value of $\delta x$. As a summary, the approximation of the differences by differentials is well behaved in the region $|x-t|\gg 1$. Thus, we only need to check the validity  of this approximation in the region of $|x-t|\sim 1$.  To do this, we substitute \eqref{eq:relationdiscon} into \eqref{eq:EOM} and numerically compute the following dimensionless variables
\begin{equation}\label{eq:diserrs}
\begin{aligned}
\text{err}_{\clv}(v,t)&:=-\frac{\Pl}{ \clv(v,t) }\frac{\dd \clv(v,t)}{\dd t}+\frac{\left(2 \sin \left[2 \fb(v,t) +\fc(v,t)\right]+\sin \left[2\fb(v,t)\right]\right)}{2 \beta  \sqrt{\Delta }},\\
\text{err}_{\clb}(v,t)&:=-\Pl^2\frac{\dd \clb(v,t)}{\dd t}-\frac{\sin(\fb(v,t))}{\beta\sqrt{\Delta}} \Bigg[\frac{2 \sin \left[\fb(v,t)+\fc(v,t)\right]+\sin \left[\fb(v,t)\right] }{8 \pi  \beta  \sqrt{\Delta} }-\Pl\frac{\clc(v,t)}{\clv(v,t)} \cos \left[\fb(v,t)+\fc(v,t)\right] \Bigg]\\
&-\frac{\Pl^2}{8\pi\clz(v,t)^2}+\frac{ \pi}{2} \frac{\Pl^2}{ \clv(v,t)^2} \left(\clz(v,t)^2-\clz(v+1,t)^2\right)^2,\\
\text{err}_{\clz}(v,t)&:=-\frac{\Pl}{\clz(v,t)}\frac{\dd \clz(v,t)}{\dd t}+\frac{ \sin \left[\fb(v,t)\right] \cos \left[\fb(v,t)+\fc(v,t)\right]}{\beta  \sqrt{\Delta }},\\
\text{err}_{\clc}(v,t)&:=-\frac{\Pl^2}{\clv(v,t)}\frac{\dd \clc(v,t)}{\dd t}+\frac{\Pl^2}{4 \pi  \clz(v,t)^2}+2\pi\Pl^2\frac{\clz(v,t)^2}{\clv(v,t)}  \left(\frac{\clz(v+1,t)^2-\clz(v,t)^2}{\clv(v,t)}-\frac{\clz(v,t)^2-\clz(v-1,t)^2}{\clv(v-1,t)}\right).
\end{aligned}
\end{equation}
Fig \ref{fig:diserrsonM} plots the values of $\max|\text{err}_{\clv}|$, $\max|\text{err}_{\clb}|$, $\max|\text{err}_{\clz}|$ and $\max|\text{err}_{\clc}|$ which are defined by
\begin{equation}
\begin{aligned}
\max|\text{err}_{\clv}|:=&\max_{t\in\mathbb R,v\in V(\gamma)}|\text{err}_{\clv}(v,t)|,\\
\max|\text{err}_{\clb}|:=&\max_{t\in\mathbb R,v\in V(\gamma)}|\text{err}_{\clb}(v,t)|,\\
\max|\text{err}_{\clz}|:=&\max_{t\in\mathbb R,v\in V(\gamma)}|\text{err}_{\clz}(v,t)|,\\
\max|\text{err}_{\clc}|:=&\max_{t\in\mathbb R,v\in V(\gamma)}|\text{err}_{\clc}(v,t)|.
\end{aligned}
\end{equation} 
According to \eqref{eq:scope}, the minimal values of $2GM\Pl^{-1}$ for a fixed $\delta x$ is $(2GM\Pl^{-1})_{\rm min}\cong 4.19/ (\delta x\Pl^{-1})$, which is $4.19\times 10^{-4}$ for the case plotted in 
Fig. \ref{fig:diserrsonM}. Thus, the inequality of \eqref{eq:scope} is satisfied for $2GM\geq 10^6\Pl$.  Moreover, as shown in Fig. \ref{fig:diserrsonM}, the values of $\max|\text{err}_{\clv}|$, $\max|\text{err}_{\clb}|$, $\max|\text{err}_{\clz}|$ and $\max|\text{err}_{\clc}|$ are tiny. Thus the solutions to the differential equations \eqref{eq:EOMcon} approximate well the solutions to the difference EOMs \eqref{eq:EOM} for large $2GM$. Note that for large $2GM$, $\delta x$ can be chosen small. However, for small $2GM$, this conclusion is no longer valid due to the large value of the allowed $\delta x$. This can be seen intuitively from the numerical results.  
 Fig. \ref{fig:diserrs2} plots the values of $\max|\text{err}_{\clv}|$, $\max|\text{err}_{\clb}|$, $\max|\text{err}_{\clz}|$ and $\max|\text{err}_{\clc}|$ depending on $\delta x$ for $2GM=100\Pl$. For $2GM=100\Pl$, the allowed values of $\delta x$ by \eqref{eq:scope} are those satisfying $\delta x\,\Pl^{-1}\gg 4.19\times 10^{-2}$ which are indicated by the red lines in Fig. \ref{fig:diserrs2}. However, as shown in  Fig. \ref{fig:diserrs2}, the values of  $\max|\text{err}_{\clv}|$, $\max|\text{err}_{\clb}|$, $\max|\text{err}_{\clz}|$ and $\max|\text{err}_{\clc}|$  are no longer small for the allowed $\delta x$. Thus the solutions to the differential equations \eqref{eq:EOMcon} no longer well approximate the solutions to the difference EOMs \eqref{eq:EOM}. 
\begin{figure}
\centering
\includegraphics[width=0.9\textwidth]{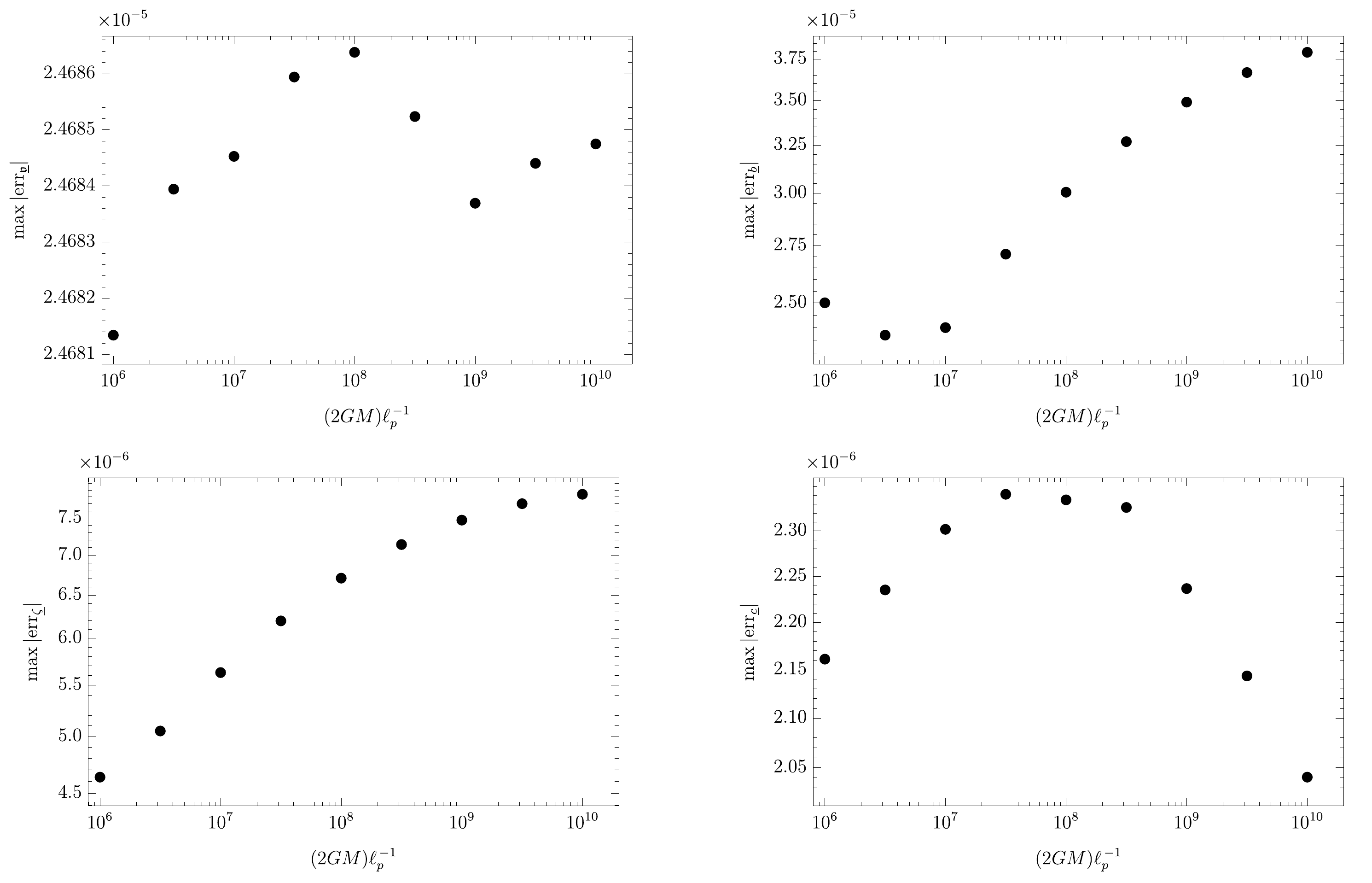}
\caption{Plots of the values of $\max|\text{err}_{\clv}|$, $\max|\text{err}_{\clb}|$, $\max|\text{err}_{\clz}|$ and $\max|\text{err}_{\clc}|$ depending on $2GM$ for $\delta x=10^{-4}\Pl$. By \eqref{eq:scope}, the minimal values of $(2GM)\Pl^{-1}$ in this case is $4\pi/(\alpha_0\delta x\Pl^{-1} )\approx 4.19\times 10^{4}$. 
}
\label{fig:diserrsonM}
\end{figure}
\begin{figure}
\centering
\includegraphics[width=0.9\textwidth]{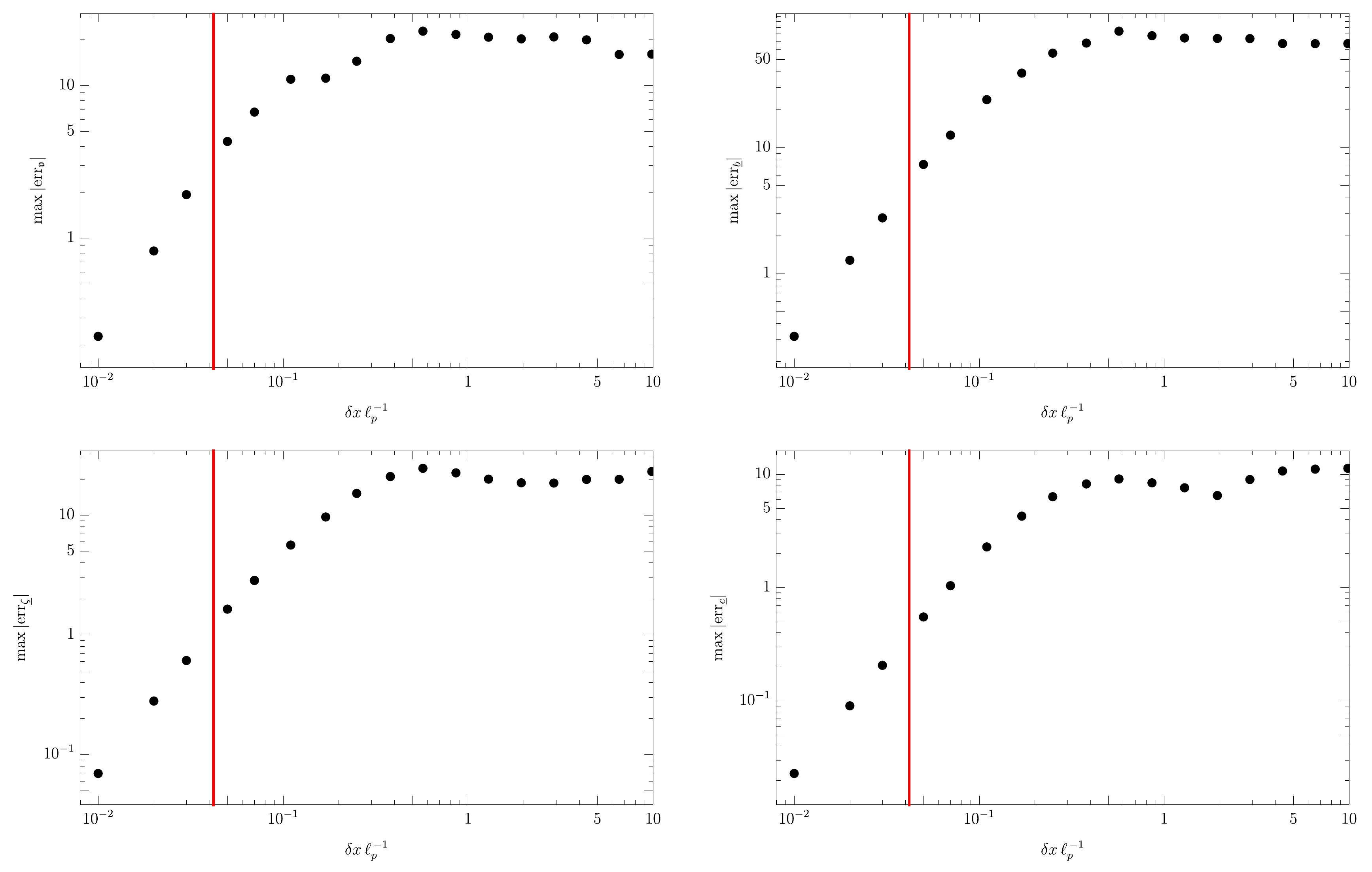}
\caption{ The values of $\max|\text{err}_{\clv}|$, $\max|\text{err}_{\clb}|$, $\max|\text{err}_{\clz}|$ and $\max|\text{err}_{\clc}|$ depending on $\delta x$ for $2GM=100\Pl$ (black dots). $\delta x\,\Pl^{-1}=4\pi/(\alpha_0(2GM)\Pl^{-1})$ is plotted by the red lines which indicate the allowed values of $\delta x\,\Pl^{-1}$ by \eqref{eq:scope}. }
\label{fig:diserrs2}
\end{figure}

By this analysis, the discrete EOMs  \eqref{eq:EOM} {have correct classical limit}
only if the inequality \eqref{eq:scope} is satisfied and, simultaneously $\delta x$ is so small that solutions to \eqref{eq:EOM} are compatible with that of the continuous effective dynamics which ensures that the classical Schwarzschild solutions are recovered asymptotically.
However, these two conditions lead to opposite tendencies for the values of $\delta x$. According to \eqref{eq:scope}, $\delta x$ should be large so that \eqref{eq:EOM} can be valid for a large scope, which is contradictory to the second requirement that $\delta x$ should be small such that the continuous EOMs can well approximate the discrete ones. This tension prevent the value of $\delta x$ to vanishing, because for vanishing $\delta x$, the effective EOMs are no longer valid for any $2GM$. However, on the other hand, this tension requires $\delta x$ to be small enough, because otherwise the continuous EOMs cannot give well-approximated solutions to the discrete EOMs. This argument implies that $\delta x$ should be some finitely small value. Recalling that $\delta x$ is the length of the edges in $\gamma$, we conclude that the current model does not allow $\gamma$ to shrink to a continuous line, even though the classical theory is presented based on a continuous line. Thus, the lattice structure underlying the current model is fundamental. However, this conclusion does not mean that the current lattice model never owns continuous limit. Since $\delta x$ is small, the discrete EOMs can be well approximated by the differential EOMs \eqref{eq:EOMcon}, which can be derived  from a continuous Hamiltonian \eqref{eq:Hlimitcon} together with the Poisson brackets \eqref{eq:poissoncon}. Thus, the continuous description of the lattice model is the model where the phase space, consisting of fields $(\conv(x),\conb(x),\conz(x),\conc(x))$ on $\mathbb R$, is endowed with the Poisson brackets \eqref{eq:poissoncon} and the Hamiltonian \eqref{eq:Hlimitcon}. This continuous description is usually referred to as the effective dynamics and has been studied in \cite{Han:2020uhb}. However, according to our analysis, this continuous description is only valid for black holes with macro masses satisfying \eqref{eq:scope} but not for small BHs. In other words, macro BHs and small BHs could behave differently if the LQG effects are considered.

\section{Conclusion and Outlook}\label{sec:eight}
This paper considers the loop quantum theory of spherically symmetric gravity coupled to Gaussian dusts where the Guassian dusts provide a material reference frame of the spaceand time to deparameterize gravity. Classically, the dynamics of this model can give the Schwarzschild solution in Lema\^itre coordinate. Thus this model provide an approach to study the spherically symmetric BH. As in the usual loop quantum theory, the present model is constructed base on some graph $\gamma$ which is an 1-dimensional lattice here. By using the $\bar\mu$-scheme to regularize the physical Hamiltonian, we obtain an Hamiltonian operator which governs the quantum dynamics. Then the quantum dynamics is studied by the path integral formulation and an effective action is obtained. With this effective action, an effective continuous description of the quantum lattice model is derived which is encoded in \eqref{eq:Hlimitcon}. Noted that \eqref{eq:Hlimitcon} is indeed the classical Hamiltonian with the holonomy correction which is usually used in literature like \cite{Han:2020uhb}. However, according to our analysis, this effective continuous description is valid only if \eqref{eq:scope} holds. In other works, this effective description, roughly speaking, is valid for macro BHs but not for small BHs. Therefore, it is natural to ask questions on the dynamics of small BHs. Moreover, it is also interesting to introduce BH evaporation to the current model and consider the issue on the final fate of the evaporation. These topics will be left for our further studying. 

\section*{Acknowledgements}
This work is benefited greatly from the numerous discussions with Muxin Han and Honguang Liu, and supported by the Polish Narodowe Centrum Nauki, Grant No. 2018/30/Q/ST2/00811.

\providecommand{\href}[2]{#2}\begingroup\raggedright\endgroup

\end{document}